 \patchcmd\Gread@eps{\@inputcheck#1 }{\@inputcheck"#1"\relax}{}{}
\DeclareMathAlphabet{\mathantt}{OT1}{antt}{li}{it}
\DeclareMathAlphabet{\mathpzc}{OT1}{pzc}{m}{it}
\DeclarePairedDelimiter\norm{\lVert}{\rVert}%
\newtheorem{theorem}{Theorem}
\newtheorem{lemma}[theorem]{Lemma}
\DeclareFontFamily{OT1}{pzc}{}
\DeclareFontShape{OT1}{pzc}{m}{it}%
  {<-> s * [1.1] pzcmi7t}{}
\DeclareMathAlphabet{\mathpzc}{OT1}{pzc}%
                     {m}{it}
\def\R{\mathcal{R}}
\def\I{\mathcal{I}}
\def\J{\mathcal{J}}
\def\S{\mathcal{S}}
\def\K{\mathcal{K}}
\renewcommand{\vec}[1]{\mathbold{#1}}
\renewcommand{\bm}[1]{\mathbold{#1}}
\let\oldnl\nl% Store \nl in \oldnl
\newcommand{\nonl}{\renewcommand{\nl}{\let\nl\oldnl}}% Remove line number for one line
\title{User-centric Performance Optimization with Remote Radio Head Cooperation in C-RAN}
\begin{document}

\bstctlcite{IEEEexample:BSTcontrol}

\title{User-centric Performance Optimization with Remote Radio Head Cooperation in C-RAN}
\author{

    \IEEEauthorblockN{Lei~You and Di~Yuan}\\
    
    \IEEEauthorblockA{Department of Information Technology, Uppsala University, Sweden
    \\\{lei.you; di.yuan\}@it.uu.se
    }
    
    \thanks{The paper has been partly presented in 2017 International Workshop on Resource Allocation, Cooperation and Competition in Wireless Networks, WiOpt, Paris, May 2017~\cite{lei-wiopt}.}

}

\maketitle

\begin{abstract}
In a cloud radio access network (C-RAN), distributed remote radio heads (RRHs) are coordinated by baseband units (BBUs) in the cloud. The centralization of signal processing provides flexibility for coordinated multi-point transmission (CoMP) of RRHs to cooperatively serve user equipments (UEs). We target enhancing UEs' capacity performance, by jointly optimizing the selection of RRHs for serving UEs, i.e., CoMP selection, and resource allocation. We analyze the computational complexity of the problem. Next, we prove that under fixed CoMP selection, the optimal resource allocation amounts to solving a so-called iterated function. Towards user-centric network optimization, \textcolor[rgb]{0,0,0}{we propose an algorithm for the joint optimization problem, aiming at scaling up the capacity maximally} for any target UE group of interest. The proposed algorithm enables network-level performance evaluation for quality of experience. 
\end{abstract}
\begin{IEEEkeywords}
Cloud radio access network; user-centric network; resource allocation; CoMP
\end{IEEEkeywords}

\section{Introduction}

\subsection{Background}

\IEEEPARstart{C}{loud} radio access network (C-RAN) enables virtualization of functionalities of base stations by centrally managing a ``cloud'' that is responsible for signal processing and coordination of geographically distributed remote radio heads (RRHs)~\cite{EricssonAB:2015tv}. The baseband units (BBUs) that are separately located in base stations under the traditional cellular network architecture, are centrally deployed in BBU pools in the cloud in the C-RAN architecture. The centralization of signal processing enables coordination among RRHs. This facilitates the implementation of coordinated multipoint transmission (CoMP) for improving spectrum efficiency~\cite{Peng:2015bx}. The quality of service (QoS) may thus be enhanced by squeezing more out of the spectrum~\cite{EricssonAB:2015tv,Peng:2015bx}.

For the upcoming 5G, the concept of \textit{user-centric operation}~\cite{7411348,7880689,QoE-fairness,7462488} has been drawing attention recently. The paradigm of \textit{user-centric C-RAN} targets enhancing the quality of experience (QoE), which looks outward from the end-user. Whether or not the performance benefits from utilizing more resource is up to the type of service in use. For example, video and audio streaming, email and file transfers, or VoIP, all have different levels of sensitivities to network QoS metrics (e.g. throughput, delay, or packet loss) that are influenced by resource allocation. \textcolor[rgb]{0,0,0}{Compared to allocating network resource purely subject to the QoS fairness \cite{qos,QoE-def,QoE-fairness}, from the user-centric viewpoint, it is more rationale to allocate resource based on user groups of different service types}. In this context, \textit{optimizing the capacity for a specific target group of UEs} becomes relevant.

Resource allocation in user-centric C-RAN faces more challenges compared to the traditional long-term evolution (LTE) cellular networks. First, as CoMP is assumed to be put in use by default under the C-RAN architecture, the network becomes more connected, resulting in more complex coupling relations between the network elements. The network performance is also affected by selecting the serving RRHs for user equipments (UEs), i.e., CoMP selection. Given this background, the paper targets jointly optimizing CoMP selection and resource allocation in order to increase the capacity for \textit{any target group of users} in C-RAN. 

\subsection{Related Work}

\subsubsection{RRH selection and resource allocation}
A number of studies have focused on optimizing CoMP or resource allocation in C-RANs. 
In~\cite{Zhang:2015kr}, the authors studied CoMP-based interference
mitigation in heterogeneous C-RANs.
In~\cite{Davydov:2013cf}, the authors investigated the joint transmission (JT)
CoMP performance in C-RANs with large CoMP cluster size. The authors
of~\cite{Beylerian:2016do} investigated resource allocation of CoMP
transmission in C-RANs, and proposed a fairness-based scheme for
enhancing the network coverage. In~\cite{Ortin:2016jq}, the
authors studied the joint cell-selection and resource allocation problem, in
C-RANs without CoMP. In~\cite{Abdelnasser:2016ho}, a resource allocation
problem was studied for C-RANs with a framework of small cells
underlaying a macro cell\@. The study in~\cite{7543472} formulated an RRH selection optimization problem for power saving in C-RANs as a mixed integer linear programming model taking into account bandwidth allocation. A local search algorithm was proposed to solve the problem. In~\cite{7805409}, the authors jointly optimized RRH selection and power allocation to minimize the total transmit power of the RRHs. The file caching status in RRHs is part of the setup. The problem was formulated in a non-convex form and solved by a Lagrange dual method. The authors of~\cite{7938356} investigated the weighted sum rate problem by jointly optimizing RRH selection and power allocation. By applying a Lagrange dual method, the authors derived an optimal solution to a special case where the number of sub-carrier is infinite. 
%The complexity of the proposed algorithm is however exponential to the number of RRHs. To further deal with the complexity, a polynomial-time algorithm was proposed to obtain a suboptimal solution. 
In~\cite{7845615}, the authors studied the energy-aware utility maximization problem by jointly optimizing beamforming, BBU scheduling and RRH selection. The problem was decomposed and solved separately, yielding a heuristic solution. The authors of~\cite{7809637} formulated an energy minimization problem with joint optimization of resource allocation and RRH selection. A greedy strategy was employed for RRH activation and pairing active RRHs to UEs. Under fixed RRH-UE association, the corresponding resource allocation was computed. The authors of~\cite{7564824} employed a statistical model for characterizing the traffic load of RRHs in C-RAN. The load incurred by UEs is directly related to the number of allocated resource blocks (RBs). A heuristic dynamic RRH assignment algorithm was proposed. \textcolor[rgb]{0,0,0}{The authors in \cite{tatsis2019energy,6820391,ali2018coordinated} investigated the influence of RRH selection on the network resource usage efficiency and energy savings, while the dynamics and coupling between the resource consumption levels of RRHs are not taken into account.}

\subsubsection{QoS/QoE optimization in C-RAN}
In~\cite{Luo2016}, an abstract model of QoS-aware service-providing framework was proposed based on queueing theory. The model admits optimal solution obtained by convex optimization with respect to the service rate. In~\cite{7880689}, the authors investigated joint precoding and RRH selection subject to QoS requirement of users. The joint optimization problem was decoupled into two stages (resulting in suboptimality), respectively for RRH selection and power minimization. 
%A solution method was proposed in terms of network energy consumption. 
The authors of~\cite{DBLP:journals/corr/RenLPYH17} considered the single user case. A QoS-driven power- and rate-adaption scheme aiming at maximizing the user capacity was proposed. The authors showed the convexity of the formulation and solved it to the optimum with a Lagrange dual method. Reference~\cite{7405334} studied the problem of jointly optimizing RRH sleep control and transmission power over optical-fiber cables that connect RRHs to the cloud. The proposed network operation is based on a heuristic strategy aiming at minimizing the power consumption with guarantee of QoE. The QoE is defined to be the packet loss probability. In~\cite{7511364}, the authors proposed a beamforming scheme to coordinate RRHs for improving QoE. The QoE is defined to be the weighted sum of sigmoidal QoS functions of users. The problem is to maximize the QoE subject to power constraints. A heuristic algorithm was proposed.

\subsubsection{Resource allocation with load-coupling}
A related line of research is the characterization of resource allocation in orthogonal-frequency-division multiple access (OFDMA) networks. A model that characterizes the coupling relationship of allocated resource among network entities is becoming adopted~\cite{6732895,6204009,7880696,4212688,DBLP:journals/corr/LiaoAS16a,6853397,lei-wiopt}. By this model, a connection between network load and user bits demand in QoS/QoE satisfaction is established.

\subsection{Motivation}
%For network operation in real world, the demand to be delivered is not a constant bit rate, and therefore the \textit{network capacity in terms of QoS/QoE maximization} is of particular interest. 
One way of evaluating the performance with respect to QoS/QoE is to measure 
 \textit{how much of the user demand can be scaled up} before the network resource becomes exhausted.
\textcolor[rgb]{0,0,0}{For fairness-based QoS enhancement, the problem was studied by computing the \textit{maximum demand scaling factor} for all users \cite{DBLP:journals/corr/LiaoAS16a,6853397,lei-wiopt}, which is fundamentally based on computing the eigenvalue as well as the eigenvector for a non-linear system. References~\cite{DBLP:journals/corr/LiaoAS16a,6853397,lei-wiopt} employ a restricted setup of ours. Their solution approaches do not apply for our generalized scenario, as the resulting problem does not map to computing the eigenvalue and eigenvector anymore.} Whether or not the maximum capacity with respect to QoE-aware demand scaling can be effectively and efficiently computed remains open.
Besides, under the C-RAN architecture, the interplay between network resource allocation and RRH cooperation needs to be captured.
To the best of our knowledge, how to optimally perform \textit{user-centric demand scaling} has not been addressed yet.

\subsection{Contribution}

The main contributions of this paper are summarized as follows. \textcolor[rgb]{0,0,0}{We propose a new framework for computing the \textit{maximum 
demand scaling factor} for any given group of users in C-RAN. Our framework is a significant extension of the one used in~\cite{DBLP:journals/corr/LiaoAS16a,6853397,lei-wiopt}}. Furthermore, 
\textcolor[rgb]{0,0,0}{based on this framework, we study the joint optimization problem of time-frequency resource allocation and CoMP selection, in terms of user-centric demand scaling}. We address the tractability of this problem. To deal with the complexity, we propose an algorithm that alternates between CoMP selection and resource allocation. Specifically, \textcolor[rgb]{0,0,0}{we prove that, with fixed CoMP selection, the optimal resource allocation amounts to solving a so-called \textit{iterated function}}. Furthermore, we provide a partial optimality condition for improving CoMP selection and prove that it is naturally combined with our resource allocation method. The condition and the method are incorporated together to form our joint optimization algorithm. \textcolor[rgb]{0,0,0}{We remark that the solution method for demand scaling in \cite{DBLP:journals/corr/LiaoAS16a,6853397,lei-wiopt} can only address a special case of ours. We further proved that, under fixed CoMP selection, our proposed method solves the resource allocation to global optimality.}
Finally, we show numerically how the joint optimization scheme can be used to scale up user demands for user-centric capacity enhancement. The obtained results reveal how CoMP improves the user capacity and how the user group size and the number of CoMP users affect the performance. \textcolor[rgb]{0,0,0}{To the best of our knowledge, this is the first paper that addresses user-centric demand scaling with load-coupling and CoMP.}

\subsection{Paper Organization}
The paper is organized as follows. Section~\ref{sec:sys_model} gives the system model. Section~\ref{sec:computation} formulates the problem and analyzes its tractability. Section~\ref{sec:solving} derives our solution method for problem solving. After discussing the numerical results in Section~\ref{sec:numerical}, the paper is concluded in Section~\ref{sec:conclusion}. The proofs of all theorems in Section~\ref{sec:solving} are detailed in the Appendix. Throughout all sections, we use bold fonts to represent vectors/matrices, and capitalized letters in calligraphy to represent mathematical sets. As for function/mapping definitions, we use the notation $g: \mathtt{var}\mapsto \mathtt{expr}$ to represent a function/mapping $g(\mathtt{var})$ with mathematical expression $\mathtt{expr}$ of variable $\mathtt{var}$.

\textcolor[rgb]{0,0,0}{We use $\mathbb{R}_{++}$ to refer to all positive real numbers, i.e. $\mathbb{R}_{++}=(0,\infty)$. Similarly, we use $\mathbb{R}_{+}$ to refer to all non-negative real numbers, i.e. $\mathbb{R}_{+}=[0,\infty)$.}

\section{System Model}
\label{sec:sys_model}

\subsection{Notations}
Denote by $\R=\{1,2,\ldots,m\}$ the set of RRHs in a C-RAN\@. Denote by
$\J=\{1,2,\ldots,n\}$ the set of UEs. 
We use matrix
$\bm{\kappa}\in{\{0,1\}}^{m\times n}$ to indicate the association
between RRHs and UEs. The matrix $\bm{\kappa}$ is subject to optimization. For the sake of presentation, let us consider any given $\bm{\kappa}$.
For this $\bm{\kappa}$, we use  
$\R_j$ and $\J_i$ as generic notations for the set of RRHs serving UE $j$ and the set of UEs served by RRH $i$, respectively.
 Specifically, $i\in\R_j$ and $j\in\J_i$ if and only if $\kappa_{ij}=1$. Note that $\bm{\kappa}$ characterizes CoMP selections, i.e., which UE is served by which RRHs. Single antenna case in C-RAN \cite{FeAnEr19, FaAiAm19, AlAhKh19} is considered in this paper.

\subsection{CoMP Transmission}

Downlink is considered. Denote by $p_i$ the transmit power of RRH $i$ ($i\in\R$) on one resource block (RB). Denote by $h_{ij}$
the channel gain between RRH $i$ and UE $j$. Let $x$ be the channel input symbol sent to UE $j$ by the RRHs in 
$\R_j$. Entity $x_k$ denotes the channel input symbol sent by the other
RRHs that are not cooperatively serving UE $j$. The received channel output at
UE $j$ can be written as 
\begin{equation}
s=\sum_{i\in\R_j}\sqrt{p}_{i}h_{ij}x +
\sum_{k\in\R\backslash\R_j}\sqrt{p}_{k}h_{kj}x_{k} + \sigma_j.
\end{equation}
We assume that ${x}$ and ${x}_k$ ($k\in\R\backslash\R_j$) are independent
zero-mean random variables of unit variance. Parameter $\sigma_j$ models the noise of any user $j$. The signal-to-interference and noise ratio (SINR) of UE $j$ is given below, by 
~\cite{Nigam:2014cd},
\begin{equation}
    \gamma_j =
    \frac{|\sum_{i\in\R_j}\sqrt{p}_{i}h_{ij}|^2}{\sum_{k\in\R\backslash\R_j}
    w_{kj}+\sigma^2_j},
\end{equation}
where $w_{kj}$ is the interference received at UE $j$ from RRH $k$. \textcolor[rgb]{0,0,0}{For any user, the interference it receives comes from those RRHs that are not serving the user, and the cooperative RRHs do not generate interference to the user \cite{Nigam:2014cd}.}
\textcolor[rgb]{0,0,0}{We remark that the gains and noises can be different for users, and our solution method proposed later enables performance evaluation for scenarios with different user channel gains and noises.}

\subsection{Interference Modeling}

We introduce the network-level interference model that is widely adopted in OFDMA systems, referred to as ``\textit{load-coupling}''\cite{6732895,6204009,7880696,4212688,DBLP:journals/corr/LiaoAS16a,6853397,lei-wiopt}. The model is shown to capture well the characterization of interference coupling~\cite{6732895}. 
Denote by $M$ the total number of RBs in consideration. Define $\rho_{k}$ as the proportion of allocated RBs of RRH $k$ to serve all of its UEs. If $\rho_{k}=0$, it means that there is no time-frequency resource in use by RRH $k$. In this case, $k$ does not generate interference to others and $w_{kj}=0$ ($j\in\J\backslash\J_k$). On the other hand, if $\rho_{k}=1$, then all the RBs in RRH $k$ are used for transmission and RRH $k$ constantly interferes with others, i.e., $w_{kj}=p_{k}|h_{kj}|^2$ ($j\in\J\backslash\J_k$) \cite{Nigam:2014cd}. For $0<\rho_k<1$, $\rho_{k}$ serves as a scale parameter of interference:
\begin{equation}
w_{kj}=p_{k}|h_{kj}|^2\rho_{k}.
\label{eq:w}
\end{equation}

By the definition of $\rho_{k}$, it is referred to as \textit{load of the RRH} $k$ and can be intuitively explained as the likelihood that RRH $k$ interferes with others.
The network load vector is represented as $\bm{\rho}=[\rho_1,\rho_2,\ldots,\rho_m]$. Increasing any load $\rho_{k}$ may lead to the capacity enhancement of UEs in $\J_k$. On the other hand, as can be seen from~\eqref{eq:w}, the increase of $\rho_{k}$ results in higher interference from $k$ to other UEs $\J\backslash\J_k$, which may cause the load levels of RRHs other than $k$ to increase. Note that a heavily loaded RRH interferes more severely to others, while an RRH that is slightly loaded tends not to generate much interference. 
%Mathematical details for our scenarios are detailed in Section~\ref{subsec:sys_mod-scaling} below. 

\subsection{User Demand Scaling}
\label{subsec:sys_mod-scaling}
Denote by $B$ the bandwidth per RB. The achievable bit rate for UE $j$ by the transmissions of $j$'s serving RRHs is denoted by a function $C_j:\mathbb{R}_{+}^{n}\rightarrow\mathbb{R}_{+}$ of SINR, which in turn is a function of the network load $\bm{\rho}$:
\begin{equation}
C_j: \bm{\rho}\mapsto MB\log_2(1+\gamma_j(\bm{\rho})).
\label{eq:cj} 
\end{equation}
\textcolor[rgb]{0,0,0}{Denote by $d_j$ the bits demand of UE $j$ ($j\in\J$). Given the proportion of allocated RBs in RRHs that are not serving UE $j$, the expression $\frac{d_j}{C_j(\bm{\rho})}$ gives the proportion of required RBs for the RRHs serving UE $j$ to satisfy this demand $d_j$. We remark that $C_j(\bm{\rho})$ is non-linear of the proportion of allocated RBs in the interfering RRHs. } Let $\mu_j$ represent the proportion of RBs allocated to $j$ by $j$'s serving RRHs. As for all UEs in $\J$, we let $\bm{\mu}=[\mu_1,\mu_2,\ldots,\mu_n]$. Because of CoMP, the RBs used by all these RRHs for serving $j$ are the same. For allocating sufficient proportion of RBs to satisfy UE $j$'s demand, we have:
\begin{equation}
\mu_j \geq \frac{d_j}{C_j(\bm{\rho})}.	
\label{eq:mu}
\end{equation}
By the definition of $\rho_{i}$ ($i\in\R$), we have 
\begin{equation}
\rho_{i}=\sum_{j\in\J_i}\mu_{j}. 
\label{eq:rhoi}
\end{equation}
 Denote by $\bar{\rho}$ the load limit of RRHs. Then we need to keep $\rho_i\leq\bar{\rho}$ ($i\in\R$), otherwise the network is overloaded, meaning that the available resource is not sufficient for delivering the demands. 
Combining~\eqref{eq:cj} with~\eqref{eq:rhoi}, for any $j\in\J$,  we use $f_j:\mathbb{R}_{+}^{n}\mapsto\mathbb{R}_{++}$ to denote the following function:
\begin{equation}
f_j:\bm{\mu}\mapsto \frac{d_j}{C_j(\bm{\mu})}.	
\label{eq:f_j}
\end{equation}

Given $\vec{d}=[d_1,d_2,\ldots,d_n]$, consider for QoE of a specific
UE $j$, we would like to scale up $d_j$ by a \textit{demand scaling
factor} $\alpha$ ($\alpha>0$). The reason of considering the 
maximization of $\alpha$ is it tells how much traffic growth the network
can accommodate by optimizing user association, as such
it provides information related to network capacity.
With demand scaling, $\mu_j$ needs to satisfy:

\begin{equation}
\mu_j\geq  \alpha f_{j}(\bm{\mu}).
\label{eq:muj}
\end{equation}
Due to the mutual coupling relationship of the elements in the vector $\bm{\mu}$, scaling up the demand for UE $j$ may cause the increase of the  interference to others such that the bits demand of other UEs cannot be satisfied. Therefore, one should also make sure that the following equation holds:
\begin{equation}
\bm{\mu}_{-j}\geq \vec{f}_{-j}(\bm{\mu}).
\label{eq:mu-j}
\end{equation}

where $\bm{\mu}_{-j}$ and $\vec{f}_{-j}$ ($j\in\J$) represent the
vectors without the $j_{\text{th}}$ element.  Finally, the resource
limits are subject to the inequalities $\sum_{j\in\J_i}\mu_j\leq
\bar{\rho}$ ($i\in\R$).  More generally, one can scale up the demand
for any UE group $\S$ ($\S\subseteq\J$). One way to view $\S$ is the
set of users with elastic traffic.  For the other users in $\J
\setminus \S$, their demand has to be satisfied without any scaling.
For the special case of $\S=\J$, maximum demand scaling corresponds to
finding out how much traffic growth is possible before the resource is
exhausted.

Note that~\eqref{eq:muj} and~\eqref{eq:mu-j} form a system of
non-linear inequalities in terms of $\bm{\mu}$ and $\alpha$, which
cannot be readily solved. There is a special case that is however
easy. \textcolor[rgb]{0,0,0}{With $\S=\J$, one can
write~\eqref{eq:muj} and~\eqref{eq:mu-j} as
$\bm{\mu}\geq\alpha\vec{f}(\bm{\mu})$, i.e., the demand scaling is on
all the UEs.}  To use the minimum amount of resource to satisfy the
demands, we have $\frac{1}{\alpha}\bm{\mu}=\vec{f}(\bm{\mu})$. In this
case, $\frac{1}{\alpha}$ and $\bm{\mu}$ are exactly the eigenvalue and
eigenvector of this non-linear equations system and can be solved by
the concave Perron-Frobenius Theorem~\cite{Krause:2001wd}. However,
the conclusion does not hold for the general case
$\S\subseteq\J$. Indeed, the general case is fundamentally different,
as $\alpha$ is not the scaling parameter in each dimension of the
function $\vec{f}$ when $\S\subset\J$.

\section{Problem Formulation and Complexity Analysis}
\label{sec:computation}

In this section, we formulate the demand scaling problem and prove its computational complexity. 
Recall that $\R_j$ and $\J_i$ characterize the CoMP selection and are induced by the matrix $\bm{\kappa}$.
To characterize CoMP selection, 
we treat $\R_j$ ($j\in\J$) and $\J_i$ ($i\in\R$) as mappings shown below, which map any matrix $\bm{\kappa}$ to the sets of serving RRHs and served UEs, respectively.
\begin{equation}
\R_j:\bm{\kappa}\mapsto \{i|\kappa_{ij}=1\}	
\end{equation}
\begin{equation}
\J_i:\bm{\kappa}\mapsto	\{j|\kappa_{ij}=1\}
\end{equation}
The problem is formulated in~\eqref{eq:maxd}. \textcolor[rgb]{0,0,0}{Solving \eqref{eq:maxd} is equivalent to solving the maximization of throughput satisfaction ratio for $\S$ subject to the max-min fairness\footnote{\textcolor[rgb]{0,0,0}{We remark that \eqref{eq:jinS} can be equivalently re-written as $\alpha\leq \min_{j\in\S}{\mu_j C_j(\bm{\rho},\bm{\kappa})}/{d_j}$. Then \eqref{eq:maxd} can be reformulated as 
\[
\max\limits_{\bm{\mu}\geq\bm{0},\bm{\kappa}}~\min_{j\in\S} \frac{\mu_j C_j(\bm{\rho},\bm{\kappa})}{d_j} \text{ s.t. \eqref{eq:jnotinS}--\eqref{eq:kappaij} } 
\]
which is exactly the maximization problem for user-specific throughput satisfaction ratio. The demand $d_j$ is satisfiable if and only if this ratio is greater than or equal to one.}} \cite{DBLP:journals/corr/LiaoAS16a}.}
\begin{subequations}
\begin{alignat}{2}
[\textit{MaxD}]\quad & 
    \max\limits_{\alpha> 0,\bm{\mu}\geq\bm{0},\bm{\kappa}}~\alpha  \label{eq:haha}\\
    \textnormal{s.t.} \quad & \mu_j\geq\alpha f_j(\bm{\mu},\bm{\kappa}) \quad j \in\S \label{eq:jinS}\\
         \quad & \mu_j\geq f_j(\bm{\mu},\bm{\kappa}) \quad j \in \J\backslash\S \label{eq:jnotinS}\\
         \quad & \sum_{j\in\J_i}\mu_j\leq \bar{\rho} \quad i \in\R \label{eq:barrho}\\
         \quad & \kappa_{ij} \in \{0,1\} \quad i\in\R,~j\in\J \label{eq:kappaij}
\end{alignat}
\label{eq:maxd}
\end{subequations}

\textcolor[rgb]{0,0,0}{The objective is to maximize the demand scaling factor $\alpha$ for a given set of UEs $\S$ ($\S\subseteq\J$), given the base user demand $\bm{d}$ ($\bm{d}$ is in the expression of the function $f_j$). Remark that the function $f_j(\bm{\mu})$, defined in~\eqref{eq:f_j}, computes the minimum required proportion of resource for satisfying the demand $d_j$ of UE $j$, and $\mu_j$ is the resource allocated to UE $j$. Therefore constraints~\eqref{eq:jnotinS} ensure that sufficient time-frequency resources are allocated for satisfying the demands $d_j$ ($j\in\J\backslash\S$). The UEs in $\S$ can be regarded as being throughput-oriented such that delivering more demands leads to higher satisfaction, as imposed by constraints~\eqref{eq:jinS}. Hence, the objective is maximizing the demand scaling factor $\alpha$ for $\S$.} Constraint~\eqref{eq:barrho} imposes the maximum RRH load limit. The variable matrix $\bm{\kappa}$ controls the CoMP selection.  

Solving \eqref{eq:maxd} yields an a network-level evaluation for user group data rate enhancement \cite{6732895,6204009,7880696,4212688,DBLP:journals/corr/LiaoAS16a,6853397}, with the channel coefficients and noises corresponding to approximate averages upon the time scale of interest of load coupling. This type of modeling is commonly used for studies and analysis at a macroscopic level, and in fact variations of gains and noises have little impact on the accuracy of the load coupling model~\cite{6732895}. In addition, we remark that the demand scaling factor is the satisfaction ratio of the user demands in $\S$. That is, given base demands $d_j$ ($j\in\J$) and the RRH resource limit $\bar{\rho}$, 
\textcolor[rgb]{0,0,0}{
the solution $\alpha^{*}$ obtained by solving \textit{MaxD} is indeed the maximum satisfiable ratio of $d_j$ ($j\in\S$) within the resource limit. Namely, if $d_j$ ($j\in\S$) are satisfiable, then we have $\alpha^{*}\geq 1$. Otherwise $\alpha^{*}<1$ holds and $\alpha^{*}$ is the satisfiable proportion of the demands $d_j$ ($j\in\S$). 
}

\textcolor[rgb]{0,0,0}{We remark that, the method of scaling the demand for $\J$, which is a special case of \textit{MaxD}, does not generalize to priority-aware resource optimization.}

Theorem~\ref{thm:NP-hard} below shows the $\mathcal{NP}$-hardness of \textit{MaxD}. \textcolor[rgb]{0,0,0}{The basic idea behind the proof is to show that there is one scenario that is at least as hard as \textit{3-SAT}.}
\begin{theorem}
\textit{MaxD} is $\mathcal{NP}$-hard.
\label{thm:NP-hard}
\end{theorem}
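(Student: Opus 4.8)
The plan is to prove $\mathcal{NP}$-hardness by exhibiting a polynomial-time reduction from a known $\mathcal{NP}$-complete problem to the decision version of \textit{MaxD}, namely: given a \textit{MaxD} instance and a rational threshold $\alpha^{*}$, decide whether the optimum of~\eqref{eq:maxd} is at least $\alpha^{*}$. The starting observation that guides the whole construction is that all combinatorial difficulty must be concentrated in the binary CoMP-selection matrix $\bm{\kappa}$: once $\bm{\kappa}$ is fixed, constraints~\eqref{eq:jinS}--\eqref{eq:barrho} become a continuous feasibility/optimization problem which (by the results developed later in the paper, i.e.\ the iterated-function / concave Perron--Frobenius machinery for fixed CoMP selection) is tractable. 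Hence the reduction should encode a discrete decision into the set of feasible assignments $\bm{\kappa}$, not into the continuous layer.

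Concretely, I would reduce from a covering/packing-type problem that matches the structure of \textit{MaxD} — the per-UE choice of a serving set $\R_j$ is a set-selection, and the per-RRH constraint $\sum_{j\in\J_i}\mu_j\le\bar{\rho}$ is an additive budget — so a natural candidate is a set-cover / exact-cover instance (or, for the budget side, a partition/bin-packing instance); 3-SAT would serve as a fallback via standard gadgets. The crucial design step is to neutralize the nonlinear load-coupling so that the instance behaves combinatorially: by choosing the geometry and parameters so that the cross channel gains $h_{kj}$ vanish (or $\sigma^{2}$ dominates the interference terms $w_{kj}=p_{k}|h_{kj}|^{2}\rho_{k}$) for the relevant RRH--UE pairs, each achievable rate $C_j$ becomes a function of the serving set $\R_j$ alone and is independent of the global load $\bm{\rho}$. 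Then $f_j=d_j/C_j$ collapses to a constant determined only by the chosen $\R_j$, the demand inequalities~\eqref{eq:jinS}--\eqref{eq:jnotinS} are tight at the optimum ($\mu_j=\alpha f_j$ for $j\in\S$, $\mu_j=f_j$ otherwise), and~\eqref{eq:barrho} turns into linear budget constraints over the selection variables. One then tunes $d_j$, $p_i$, $h_{ij}$, $\bar{\rho}$, and $\alpha^{*}$ so that a feasible CoMP selection attaining scaling factor $\ge\alpha^{*}$ exists precisely when the source instance is a yes-instance.

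The remaining steps are to verify the reduction is polynomial in the input size and to prove both directions of the equivalence (a valid cover/assignment yields a feasible $\bm{\kappa}$ reaching $\alpha^{*}$, and conversely any $\bm{\kappa}$ reaching $\alpha^{*}$ induces a valid solution of the source problem). The main obstacle I anticipate is exactly the neutralization of the coupling: because $w_{kj}=p_{k}|h_{kj}|^{2}\rho_{k}$ links all loads simultaneously, I must argue \emph{rigorously}, rather than approximately, that the continuous optimum lands at the intended combinatorial threshold — e.g.\ by selecting parameters that make the interference identically zero for the relevant pairs, or by a monotonicity/limiting argument exploiting that $f_j$ is increasing in the interfering loads. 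Once the decoupling is made exact, the equivalence between feasibility of the nonlinear system and feasibility of the combinatorial budgets follows, and the rest is routine bookkeeping.
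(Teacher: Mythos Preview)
Your plan hinges on ``neutralizing'' the load coupling so that the instance becomes a clean packing/covering problem, but in this model that step cannot be carried out as stated. The same gain $h_{ij}$ governs both the useful signal (when $i\in\R_j$) and the interference $w_{ij}=p_i|h_{ij}|^2\rho_i$ (when $i\notin\R_j$). Hence, for a UE $j$ to have any nontrivial CoMP choice you need at least two RRHs with $h_{ij}>0$; but then every non-serving candidate necessarily interferes with $j$ with a weight proportional to its own load $\rho_i$, which in turn depends on the global assignment. Setting the ``cross'' gains to zero removes the choice along with the interference, so there is no parameter regime in which the interference is \emph{identically} zero while $\bm{\kappa}$ still carries combinatorial content. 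Your fallback---a noise-dominated limiting argument---could in principle be made to work, but it would require (i) proving $\mathcal{NP}$-hardness of the idealized zero-interference problem, whose structure (each ``item'' may enter several ``bins'' simultaneously, with a size that depends on the chosen bin-set through $C_j(\R_j)$) is not that of bin-packing, set cover, or exact cover, and (ii) a quantitative gap argument showing that for large enough $\sigma^2$ the coupled fixed point stays on the same side of the threshold; you supply neither.

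The paper's proof goes in the opposite direction: rather than suppressing the coupling, it \emph{uses} it as the gadget. The reduction is from \textit{3-SAT}. For each Boolean variable one creates two RRHs $a_i,\hat a_i$ and a UE $v_i$ whose only candidates are these two; a separate ``budget'' UE $v_0$ receives interference from every $a_i,\hat a_i$, and the parameters are tuned so that $v_0$'s serving RRH stays within its load limit \emph{iff} exactly one of each pair is active---this encodes the truth assignment. Each clause gets its own UE whose server receives interference precisely from the three literal RRHs of that clause, and is overloaded iff all three are active---this encodes clause satisfaction. Thus the hardness is manufactured by the interference constraints~\eqref{eq:jinS}--\eqref{eq:barrho} themselves, not by sidestepping them. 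The missing idea in your plan is exactly this: the nonlinear load coupling is the encoding device, not an obstacle to be eliminated.
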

\begin{proof}
We prove the theorem by a polynomial-time reduction from the 3-satisfiability
(\textit{3-SAT}) problem that is $\mathcal{NP}$-complete. Consider a \textit{3-SAT} problem with
$N_1$ Boolean variables $b_1,b_2,\ldots,b_{N_1}$, and $N_2$ clauses. A
Boolean variable or its negation is referred to as a literal, e.g.  $\hat{b}_i$
is the negation of $b_i$. A clause is composed by a disjunction of exactly three
distinct literals, e.g.,  $(b_1\vee b_2 \vee\hat{b}_3)$.
The \textit{3-SAT} problem amounts to determining whether or not there exists an
assignment of true/false to the variables, such that all
clauses are satisfied. 

The corresponding feasibility problem of \textit{MaxD} is that whether or not there exists any $(\alpha,\bm{\mu},\bm{\kappa})$ such that constraints \eqref{eq:haha}--\eqref{eq:kappaij} are satisfied. 
To make the reduction, we construct a specific network scenario as follows. Suppose
we have $N_1+N_2+1$ UEs in total, denoted by
$v_{0},v_{1},v_{2},\ldots,v_{N_1+N_2}$, respectively. Also, we have in total
$2N_1+N_2+1$ RRHs, denoted by $a_1,\hat{a}_1,a_2,\hat{a}_2,\ldots,a_{N_1},\hat{a}_{N_1}$, and
$a_{0},a_{N_1+1},\ldots,a_{N_1+N_2}$. The RRHs $a_1,\hat{a}_1,a_2,\hat{a}_2,\ldots,a_{N_1},\hat{a}_{N_1}$ are the counter parts to the $2N_1$ variables and their negations. And the RRHs $a_{N_1+1},\ldots,a_{N_1+N_2}$ are corresponding to the $N_2$ clauses. For each $v_{i}$
($1\leq i\leq N_1$), we set $\{a_i,\hat{a}_i\}$ as its candidate RRHs. For $v_{0}$ and each $v_{j}$
($N_1<j\leq N_1+N_2)$, we have exactly one candidate RRHs. Therefore, their serving RRHs are fixed, i.e., $\R_{v_{0}}=\{a_0\}$ and $\R_{v_{j}}=\{a_j\}$. Let $p_{a_0}=3N_1+1$. For $1\leq i\leq N_1$, let $p_{a_{i}}=p_{\hat{a}_{i}}=3.0$. For $N_1<j\leq N_1+N_2$, let $p_{a_j}=3.0$.
 For UE $v_0$, $|h_{a_0,v_0}|^2=|h_{a_{i},v_0}|^2=|h_{\hat{a}_i,v_0}|=1.0$ ($1\leq i\leq N_1$). 
For any UE $v_i$ ($1\leq i\leq N_1$), $|h_{a_i,v_i}|^2=|h_{\hat{a}_i,v_i}|^2=1.0$.
For any UE $v_{j}$ ($N_1< j\leq N_1+N_2$), $|h_{a_i,v_{j}}|^2$ and $|h_{\hat{a}_i,v_{j}}|^2$  ($1\leq i\leq N_1$) equal $\frac{1}{3}$ if $b_i$ and $\hat{b}_i$ appears in clause $j$, respectively.
In addition, $|h_{a_{j},v_{j}}|^2=1.0$ ($N_1<j\leq
N_1+N_2$).
The gain values between all other RRH-UE
pairs are negligible, treated as zero. \textcolor[rgb]{0,0,0}{The noise power $\sigma^2_j$ ($\forall j\in\J$) is $1.0$.} We normalize the demands
of UEs by $B\times M$, such that $d_{v_i}=2.0$ ($1\leq i\leq N_1$) and $d_{v_0}=d_{v_j}=1.0$ ($N_1<j\leq N_1+N_2$).  
Below we establish connections between solutions of \textit{MaxD} and those of \textit{3-SAT}. For \textit{MaxD}, note that if $(1,\bm{\mu},\bm{\kappa})$ is not feasible, then $(\alpha,\bm{\mu},\bm{\kappa})$ with $\alpha>1$ is not either.

First, we note that each UE $j$ ($0\leq j\leq N_1+N_2+1$) should be served by at
least one RRH, otherwise $C_j$ equals $0$ and
constraint~\eqref{eq:jinS} or~\eqref{eq:jnotinS} would be violated. Thus, $a_0$ is serving $v_0$
and $a_{N_1+1}, a_{N_1+2}, \ldots, a_{N_1+N_2}$ are serving
$v_{N_1+1},v_{N_1+2},\ldots,v_{N_1+N_2}$, respectively.  Second, it can be
verified that $v_{i}$ ($1\leq i\leq N_1$) can only be served by exactly
one RRH in $\{a_{i},\hat{a}_{i}\}$, i.e, either $\R_{v_{i}}=\{a_{i}\}$ or $\R_{v_{i}}=\{\hat{a}_{i}\}$. 
This is because, for $1\leq i\leq N_1$, the interference $w_{a_i,v_0}$ (or $w_{\hat{a}_i,v_0}$) generated from each $a_i$ (or $\hat{a}_i$) to $v_{0}$ is $3.0$, if exactly one of $\{a_{i},\hat{a}_{i}\}$ serves $v_{i}$: We assume $a_i$ serves $v_i$, then
\[
w_{a_i,v_{0}}=p_{a_i}|h_{a_i,v_0}|^2\rho_{a_i}=3.0\times \frac{2.0}{\log_2(1+|\sqrt{3}|^2)} = 3.0
\]
 In this case, one can verify that $a_0$ is fully loaded. In addition, letting any $v_i$ ($1\leq i\leq N_1$) served by both $a_{i}$ and $\hat{a}_i$ results an interference to $v_0$ being larger than $3.0$, i.e.
\[
w_{a_i,v_{0}}+w_{\hat{a}_i,v_{0}}=6.0\times\frac{2.0}{\log_2(1+|2\sqrt{3}|^2)} > 3.0 
\]
Then 
$a_0$ would be overloaded ($\rho_{a_0}>1$). Besides, for each clause $j$ ($N_1< j \leq N_2$), the three
corresponding RRHs cannot be all active in serving UEs. 
Otherwise, the RRH that is
serving the UE corresponding to this clause would be overloaded. To see this, consider a clause $(b_1\vee b_2\vee \hat{b}_3)$ and its corresponding RRHs $a_1$, $a_2$ and
$\hat{a}_3$. Assume this clause is associated with some UE $j$ ($N_1<j\leq N_1+N_2$). By the above discussion, any of $a_1$, $a_2$ and
$\hat{a}_3$ is fully loaded if it is active. Then, if all of them are active, we have 
\[
w_{a_1,v_j}=w_{a_2,v_j}=w_{\hat{a}_3,v_j}=3.0\times \frac{1}{3}\times 1.0=1.0
\]
and thus for this UE $j$ ($N_1<j\leq N_1+N_2$) we have
\begin{multline*}
\rho_{a_j} = \frac{d_{v_j}}{\log_2\left(1+\frac{p_{a_j}|h_{a_j,v_j}|^2}{w_{a_1,v_j}+w_{a_2,v_j}+w_{\hat{a}_3,v_j}+\sigma^2_{v_j}}\right)} \\ 
=\frac{1.0}{\log_2\left(1+\frac{3.0\times 1.0}{3.0\times1.0+1.0}\right)}>1.0
\end{multline*}
On the other hand, one can verify that if less than three of $a_1$, $a_2$ and
$\hat{a}_3$ are active for serving UEs, then $\rho_{a_j}\leq 1$.

% thus violating
%the maximum resource limit constraint in~\eqref{eq:barrho}. 

Now suppose there
is an RRH-UE association that is feasible. For each Boolean variable $b_i$, we set
$b_i$ to be true if $\hat{a}_{i}$ is serving UE $v_{i}$. Otherwise, $v_{i}$ must
be served by $a_{i}$ and we set $b_i$ to false. Now we evaluate the satisfiability of each clause. For the sake of presentation, denote this clause by $(b_1\vee b_2\vee\hat{b}_3)$ as an example. The clause is
satisfied if and only if at least one of its literals being true.
As discussed above, a feasible solution for the constructed instance of \textit{MaxD} cannot
have all the three corresponding RRHs $a_1$, $a_2$, and $\hat{a}_3$ being in the status of serving UEs. 
Therefore, at least one of $a_1$, $a_2$, and $\hat{a}_3$ should be idle, meaning that the corresponding one of $b_1$, $b_2$, or $\hat{b}_3$ is set to be true.
Therefore, based a feasible solution of the constructed problem, a feasible solution of
the \textit{3-SAT} problem instance can be accordingly constructed. Conversely, suppose we have a feasible solution for a \textit{3-SAT} instance. Then we choose $a_i$ to serve $v_{i}$ if $\hat{b}_i$ is true, otherwise $\hat{a}_i$ is selected instead. Doing so satisfies all the demands for UEs $j$ ($0\leq j\leq N_1$). Furthermore, the demands of UEs $i$ ($N_1<i\leq N_1+N_2$) are satisfied as well, since at most two out of the three RRHs defined for the three literals of the clause will be serving UE. Thus the RRH-UE association is feasible for the constructed instance of \textit{MaxD}. Hence the conclusion. 
\end{proof}

 Since \textit{MaxD} is $\mathcal{NP}$-hard, one cannot expect any exact algorithm with good scalability for solving \textit{MaxD} optimally, unless $\mathcal{P}=\mathcal{NP}$. 

\section{Problem Solving}
\label{sec:solving}

In this section, we derive theoretical foundations for an algorithm solving \textit{MaxD}. We introduce some basic mathematical concepts in Section~\ref{subsec:basics}. In Section~\ref{subsec:demand_scaling} we solve \textit{MaxD} to global optimum \textit{under fixed CoMP selection}, which serves as a sub-routine for the overall algorithm. In Section~\ref{subsec:association} we give a partial optimality condition for optimizing the CoMP selection. An algorithm that alternates between CoMP selection and resource allocation is proposed in Section~\ref{subsec:algorithm}. \textcolor[rgb]{0,0,0}{We further show that considering the case $\S\subseteq\J$ enables to optimize per-user rate based on the user's priority, which enables gauging the network performance enhancement in a finer granularity in Section~\ref{subsec:priority}. The services with elastic demands could be put into the group $\S$ for QoS enhancement, while keeping the demands of other inelastic services being strictly satisfied. In Section~\ref{subsec:bound}, we give a computationally efficient bounding scheme that yields upper bound for \textit{MaxD} under two extra constraints.} 

\subsection{Basics}
\label{subsec:basics}
The following lemma shows a property of the function $\vec{f}(\bm{\mu})$. The result follows the reference~\cite[Lemma~6]{7880696}.	
\begin{lemma}
Function $\vec{f}(\bm{\mu})$ is a standard interference function (SIF) \cite{Yates:1995eh} for non-negative $\bm{\mu}$, i.e. the following properties hold:
\begin{enumerate}
	\item $\vec{f}(\bm{\mu}')\geq\vec{f}(\bm{\mu})$ if $\bm{\mu}'\geq\bm{\mu}$.
	\item $\beta\vec{f}(\bm{\mu})>\vec{f}(\beta\bm{\mu})$ $(\beta>1)$.
\end{enumerate}
\label{lma:fu}
\end{lemma}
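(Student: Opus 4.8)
The plan is to verify the two defining properties of a standard interference function directly from the closed forms of $\gamma_j$, $C_j$ and $f_j$, exploiting the fact that the aggregate interference seen by a UE is a \emph{linear homogeneous} function of the load vector. Writing $S_j=|\sum_{i\in\R_j}\sqrt{p}_{i}h_{ij}|^2$ for the load-independent signal power and $I_j(\bm{\mu})=\sum_{k\in\R\backslash\R_j}p_k|h_{kj}|^2\rho_k$ with $\rho_k=\sum_{l\in\J_k}\mu_l$ for the interference, we have $\gamma_j(\bm{\mu})=S_j/(I_j(\bm{\mu})+\sigma^2)$. Since each $\rho_k$ is a nonnegative combination of the entries of $\bm{\mu}$, the term $I_j$ is nonnegative, nondecreasing in $\bm{\mu}$, and positively homogeneous, i.e. $I_j(\beta\bm{\mu})=\beta\,I_j(\bm{\mu})$; these two structural facts drive the whole argument. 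Property~1 then falls out immediately: if $\bm{\mu}'\geq\bm{\mu}$ then $I_j(\bm{\mu}')\geq I_j(\bm{\mu})$ for every $j$, so $\gamma_j$ is nonincreasing (its numerator $S_j$ is constant), hence $C_j=MB\log_2(1+\gamma_j)$ is nonincreasing and $f_j=d_j/C_j$ is nondecreasing; reading this componentwise gives $\vec{f}(\bm{\mu}')\geq\vec{f}(\bm{\mu})$.

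For Property~2 I would argue componentwise, i.e. prove $\beta f_j(\bm{\mu})>f_j(\beta\bm{\mu})$. Cancelling $d_j$ and $MB$, this is equivalent to $\beta\log_2(1+\gamma_j(\beta\bm{\mu}))>\log_2(1+\gamma_j(\bm{\mu}))$, and after exponentiating, to $(1+\gamma_j(\beta\bm{\mu}))^{\beta}>1+\gamma_j(\bm{\mu})$. Using homogeneity of $I_j$, I would write $\gamma_j(\beta\bm{\mu})=r\,\gamma_j(\bm{\mu})$ with $r=(I_j(\bm{\mu})+\sigma^2)/(\beta I_j(\bm{\mu})+\sigma^2)\in(0,1]$, and record the crucial observation $\beta r=1+(\beta-1)\sigma^2/(\beta I_j(\bm{\mu})+\sigma^2)\geq 1$. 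Setting $\psi(\gamma)=\beta\log(1+r\gamma)-\log(1+\gamma)$ (the logarithm base is immaterial), one has $\psi(0)=0$ and $\psi'(\gamma)=\beta r/(1+r\gamma)-1/(1+\gamma)$, whose sign equals that of $(\beta r-1)+r(\beta-1)\gamma$, which is strictly positive for every $\gamma>0$. Hence $\psi$ is strictly increasing on $[0,\infty)$, so $\psi(\gamma_j(\bm{\mu}))>0$, which is exactly the desired strict inequality.

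The monotonicity part is routine; the delicate step is the scalability inequality, where one must show that the multiplicative factor $\beta$ in front of the logarithm more than compensates for the reduction of SINR caused by scaling the interference up by $\beta$. The clean way through is the reduction to $(1+r\gamma)^{\beta}>1+\gamma$ together with the elementary fact $\beta r\geq 1$: this converts an awkward two-variable inequality (jointly in $\beta$ and $\gamma$) into the monotonicity of the single-variable function $\psi$, whose derivative sign is immediate. The only case needing a separate glance is $I_j(\bm{\mu})=0$ (no interfering RRH active), where $r=1$ and the claim collapses to $(1+\gamma)^{\beta}>1+\gamma$, which holds because $\beta>1$ and $\gamma>0$.
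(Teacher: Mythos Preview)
Your proof is correct. The monotonicity part is immediate, and your scalability argument is clean: the key identity $\beta r=1+(\beta-1)\sigma^2/(\beta I_j(\bm{\mu})+\sigma^2)$ (in fact strictly $>1$ since $\sigma^2>0$) together with the sign computation $\psi'(\gamma)\propto(\beta r-1)+r(\beta-1)\gamma>0$ for $\gamma>0$ settles the strict inequality, and your treatment of the boundary case $I_j(\bm{\mu})=0$ is fine.

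Regarding comparison with the paper: the paper does not actually supply a proof of this lemma; it simply cites another work (\cite[Lemma~6]{7880696}) for the result. Your approach is therefore different in the sense that it is fully self-contained, deriving both properties directly from the explicit formula for $f_j$ via the linearity and positive homogeneity of the interference term $I_j$. What your route buys is transparency about exactly which structural feature drives scalability, namely the strictly positive noise floor $\sigma^2$; what the paper's citation buys is brevity, since this fact is standard in the load-coupling literature.
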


The main property of an SIF is that
a fixed point, if exists, is unique and can be computed via fixed-point iterations.
To be more specific, a vector $\bm{\mu}$ satisfying $\bm{\mu}=\vec{f}(\bm{\mu})$, if exists, is unique and can be obtained by the iterations $\bm{\mu}^{(k)}=\vec{f}(\bm{\mu}^{(k-1)})$ ($k\geq 1$) with any $\bm{\mu}^{(0)}\in\mathbb{R}^{n}_{+}$. 

%\begin{lemma}
%$[\alpha,\bm{\mu},\bm{\kappa}]$ is at the optimum of~\textit{MaxD} only if $$
%\end{lemma}

\subsection{Computing the Optimum for Given CoMP Selection}
\label{subsec:demand_scaling}
In this subsection, we show that solving \textit{MaxD} with fixed CoMP selection amounts to solving an iterated function. In mathematics, an iterated function is a function from some set to the set itself. Solving an iterated function lies on finding a fixed point of it. The problem is formulated in~\eqref{eq:max2} below, in which $\R_j$ ($j\in\J$) and $\J_i$ ($i\in\R$) are given. \textcolor[rgb]{0,0,0}{The problem in~\eqref{eq:max2} is exactly \textit{MaxD} under fixed CoMP selection $\bm{\kappa}$.}

\begin{subequations}
\begin{alignat}{2}
\quad & 
    \max\limits_{\alpha> 0,\bm{\mu}\geq\bm{0}}~\alpha  \\
    \textnormal{s.t.} \quad & \mu_j\geq\alpha f_j(\bm{\mu}) \quad j \in\S \label{eq:max2-jinS2}\\
         \quad & \mu_j\geq f_j(\bm{\mu}) \quad j \in \J\backslash\S \label{eq:max2-jnotinS2}\\
         \quad & \sum_{j\in\J_i}\mu_j\leq \bar{\rho} \quad i\in\R \label{eq:max2-barrho2}
\end{alignat}
\label{eq:max2}
\end{subequations}

\textcolor[rgb]{0,0,0}{The problem in~\eqref{eq:max2} is a subproblem of \textit{MaxD}. The motivation of deriving such a subproblem is based on the conclusion in Lemma~\ref{lma:fu} that the function $\bm{f}(\bm{\mu},\bm{\kappa})$ under fixed $\bm{\kappa}$ is an SIF in $\bm{\mu}$. Below we derive a solution method for achieving global optimum of~\eqref{eq:max2} based on the foundational properties of SIF, which justifies our decomposition approach.}
For the sake of presentation, for any $\bm{\mu}$, we denote by $H$ a function that gives the normalized maximum load, i.e.,
\begin{equation}
H:\bm{\mu} \mapsto \frac{1}{\bar{\rho}}\max_{i\in\R}\sum_{j\in\J_i}\mu_j.
\end{equation}

Before presenting the solution method for~\eqref{eq:max2}, we outline two lemmas for optimality characterization. The two lemmas enable a reformulation  that  can be solved by an iterated function. 
\begin{lemma}
$[\alpha^{*},\bm{\mu}^{*}]$ is optimal to~\eqref{eq:max2} only if $H(\bm{\mu}^{*})=1$ and $\mu^{*}_j=\alpha f_j(\bm{\mu}^{*})$ for some $j\in\S$. 	
\label{lma:necessity}
\end{lemma}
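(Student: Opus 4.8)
The plan is to establish the two necessary conditions independently, each by contradiction: I will show that whenever one of them fails at a purported optimum $[\alpha^{*},\bm{\mu}^{*}]$, one can exhibit a feasible point with strictly larger objective, violating optimality. Throughout I would lean on the two defining properties of the standard interference function $\vec{f}$ from the preceding lemma, together with the fact that $f_j(\bm{\mu})>0$ for every $\bm{\mu}\geq\bm{0}$ and every $j$ (since $f_j$ maps into $\mathbb{R}_{++}$), and on the positive homogeneity $H(\beta\bm{\mu})=\beta H(\bm{\mu})$ of the maximum-load functional.

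First I would dispatch the tightness condition. Suppose, for contradiction, that $\mu^{*}_j>\alpha^{*}f_j(\bm{\mu}^{*})$ holds strictly for \emph{every} $j\in\S$. Holding $\bm{\mu}^{*}$ fixed, constraints \eqref{eq:jnotinS2} and \eqref{eq:barrho2} do not involve $\alpha$ and so remain satisfied. Because $\S$ is finite and each $f_j(\bm{\mu}^{*})>0$, the slacks $\mu^{*}_j-\alpha^{*}f_j(\bm{\mu}^{*})$ admit a common positive lower bound, so a sufficiently small increase $\alpha^{*}\mapsto\alpha^{*}+\epsilon$ still respects \eqref{eq:jinS2}. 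This produces a feasible pair with larger objective, contradicting optimality; hence $\mu^{*}_j=\alpha^{*}f_j(\bm{\mu}^{*})$ for at least one $j\in\S$.

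The substantive part is the load condition $H(\bm{\mu}^{*})=1$. Since \eqref{eq:barrho2} already gives $H(\bm{\mu}^{*})\leq 1$, I would assume $H(\bm{\mu}^{*})<1$ and refute it by a uniform scaling $\bm{\mu}'=\beta\bm{\mu}^{*}$ with $\beta>1$. By homogeneity, $H(\bm{\mu}')=\beta H(\bm{\mu}^{*})\leq 1$ once $\beta$ is close enough to $1$. The decisive step is to invoke scalability (property~2): for $j\in\J\backslash\S$, feasibility gives $\mu^{*}_j\geq f_j(\bm{\mu}^{*})$, so $\mu'_j=\beta\mu^{*}_j\geq\beta f_j(\bm{\mu}^{*})>f_j(\beta\bm{\mu}^{*})=f_j(\bm{\mu}')$, and the identical computation for $j\in\S$ yields $\mu'_j>\alpha^{*}f_j(\bm{\mu}')$; every interference constraint now holds with \emph{strict} slack. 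As before, finiteness of $\J$ converts these strict inequalities into a uniform positive margin, so $\alpha^{*}$ can be raised to some $\alpha'>\alpha^{*}$ while $[\alpha',\bm{\mu}']$ stays feasible, contradicting optimality and forcing $H(\bm{\mu}^{*})=1$. I expect the scaling step to be the crux: it is precisely property~2 of the SIF that turns a ``merely underloaded'' profile into genuine slack on the \emph{nonlinear} constraints~\eqref{eq:jinS2}--\eqref{eq:jnotinS2}, for without it one could scale up the load but not conclude that feasibility is preserved, and the increment in $\alpha$ could not be justified.
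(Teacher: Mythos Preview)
Your proposal is correct and follows essentially the same approach as the paper's proof: both argue each condition by contradiction, first increasing $\alpha$ when all of \eqref{eq:jinS2} are slack, and then, when $H(\bm{\mu}^{*})<1$, rescaling $\bm{\mu}^{*}\mapsto\beta\bm{\mu}^{*}$ and invoking the scalability property of $\vec{f}$ to recover strict slack in \eqref{eq:jinS2}--\eqref{eq:jnotinS2} before again bumping $\alpha$. If anything, you are slightly more explicit than the paper about why the required $\epsilon$ exists (finiteness of $\S$ and positivity of $f_j$) and about the homogeneity of $H$.
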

\begin{proof}
Suppose $[\alpha^{*},\bm{\mu}^{*}]$ is an optimal solution such that all inequalities strictly hold in~\eqref{eq:max2-jinS2}. Then one can increase $\alpha^{*}$ to $\alpha'=\alpha^{*}+\epsilon$. By setting $\epsilon$ to be a sufficiently small positive value, one can obtain a feasible solution $[\alpha',\bm{\mu}^{*}]$ with objective value $\alpha'$, which conflicts the assumption that $[\alpha^{*},\bm{\mu}^{*}]$ is optimal.
Therefore, there exists some $j$ ($j\in\S$) such that $\mu^{*}_j=\alpha f_j(\bm{\mu}^{*})$. 

Obviously any solution $[\alpha^{*},\bm{\mu}^{*}]$ with $H(\bm{\mu}^{*})>1$ is infeasible because at least one of~\eqref{eq:max2-barrho2} is violated. Now suppose $H(\bm{\mu}^{*})<1$. Then we increase $\bm{\mu}^{*}$ to $\bm{\mu}'=\beta\bm{\mu}^{*}$ ($\beta=1+\epsilon,~\epsilon>0$). With $\epsilon$ being sufficiently small, $\bm{\mu}'$ satisfies~\eqref{eq:max2-barrho2}. Due to the scalability of $\vec{f}(\bm{\mu})$, we have 
\begin{equation}
\vec{f}(\beta \bm{\mu}^{*})<\beta\vec{f}(\bm{\mu}^{*}).
\label{eq:fu_beta1}
\end{equation} 
Consider \eqref{eq:max2-jinS2} for $\bm{\mu}^{*}$, which reads for any $j\in\S$
\begin{equation}
\mu_j^{*}\geq \alpha f_j(\bm{\mu}^{*})\Leftrightarrow \beta\mu_j^{*}\geq\alpha\beta f_j(\bm{\mu}^{*})	
\label{eq:fu_beta2}
\end{equation}
Combining~\eqref{eq:fu_beta1} with~\eqref{eq:fu_beta2} we have
\begin{equation}
\beta\mu_j^{*}>\alpha f_j(\beta\bm{\mu}^{*})\Leftrightarrow \mu'_j>\alpha f_j(\bm{\mu}').	
\end{equation}
The same process applies for deriving $\mu_j>f_j(\bm{\mu}')$ for $j\in\J\backslash\S$.
Therefore, $\bm{\mu}'$ is a feasible solution to~\eqref{eq:max2} such that all inequalities in~\eqref{eq:max2-jinS2} and~\eqref{eq:max2-jnotinS2} strictly hold. Under $\bm{\mu}'$, one can increase $\alpha^{*}$ to $\alpha'$ as earlier in the proof, to obtain a better objective value, which conflicts with our assumption that $[\alpha^{*},\bm{\mu}^{*}]$ is optimal.

 Thus, at the optimum of \textit{MaxD} there is at least one UE $j$ ($j\in\S$) such that $\mu^{*}_j=\alpha^{*}f_j(\bm{\mu}^{*})$ with $H(\bm{\mu}^{*})=1$.
\end{proof}
By Lemma~\ref{lma:necessity} we know that a solution is optimal to \eqref{eq:max2} only if there exists a fully loaded RRH. Intuitively, if all RRHs in the C-RAN have unused time-frequency resource, then one can improve the objective function such that more bits would be delivered to UEs.

\begin{lemma}
$[\alpha^{*},\bm{\mu}^{*}]$ is optimal to \eqref{eq:max2} if \eqref{eq:max2-jinS2} and~\eqref{eq:max2-jnotinS2} all hold as equality and $H(\bm{\mu}^{*})=1$.	 
\label{lma:sufficiency}
\end{lemma}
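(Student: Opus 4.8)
The plan is to argue by contradiction, reducing the claim to the monotone behaviour of a parameterised standard interference function. For a fixed scaling factor $\alpha\geq 1$, define the mapping $\vec{g}_{\alpha}$ whose $j$-th component is $\alpha f_j(\bm{\mu})$ for $j\in\S$ and $f_j(\bm{\mu})$ for $j\in\J\backslash\S$. Since multiplying the nonnegative scalar $\alpha\geq 1$ into $f_j$ preserves both monotonicity and scalability, $\vec{g}_{\alpha}$ is again an SIF, so it has at most one fixed point, reachable by iteration from any starting vector. The hypothesis that \eqref{eq:jinS2} and \eqref{eq:jnotinS2} hold with equality at $[\alpha^{*},\bm{\mu}^{*}]$ is precisely the statement that $\bm{\mu}^{*}$ is the unique fixed point of $\vec{g}_{\alpha^{*}}$.

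Now suppose, for contradiction, that $[\alpha^{*},\bm{\mu}^{*}]$ is not optimal, i.e. there is a feasible $[\alpha',\bm{\mu}']$ with $\alpha'>\alpha^{*}$. Feasibility of \eqref{eq:jinS2}--\eqref{eq:jnotinS2} gives $\bm{\mu}'\geq\vec{g}_{\alpha'}(\bm{\mu}')\geq\vec{g}_{\alpha^{*}}(\bm{\mu}')$, so $\bm{\mu}'$ is a super-solution of the SIF $\vec{g}_{\alpha^{*}}$. I would then run the iteration $\bm{\nu}^{(0)}=\bm{\mu}'$, $\bm{\nu}^{(k)}=\vec{g}_{\alpha^{*}}(\bm{\nu}^{(k-1)})$: the first step gives $\bm{\nu}^{(1)}\leq\bm{\nu}^{(0)}$, and monotonicity makes the whole sequence nonincreasing; since an SIF iteration converges to the unique fixed point, $\bm{\nu}^{(k)}\to\bm{\mu}^{*}$, and a nonincreasing sequence dominates its limit, so $\bm{\mu}'\geq\bm{\mu}^{*}$.

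With $\bm{\mu}'\geq\bm{\mu}^{*}$ in hand, monotonicity of each $f_j$ together with $f_j(\bm{\mu}^{*})>0$ yields, for every $j\in\S$, the chain $\mu'_j\geq\alpha' f_j(\bm{\mu}')\geq\alpha' f_j(\bm{\mu}^{*})>\alpha^{*}f_j(\bm{\mu}^{*})=\mu^{*}_j$, i.e. a strict increase on the scaled group. It then remains to show this strict increase reaches the RRH $i_0$ that is fully loaded under $\bm{\mu}^{*}$ (such an $i_0$ exists because $H(\bm{\mu}^{*})=1$). If $i_0$ serves some UE of $\S$, this is immediate and $\sum_{j\in\J_{i_0}}\mu'_j>\bar{\rho}$. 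Otherwise I would propagate the increase through the interference coupling: the larger loads of the RRHs serving the $\S$-UEs strictly raise the interference terms $w_{kj}$ seen by the UEs of $i_0$, hence strictly raise $f_j$ and therefore $\mu'_j$ for $j\in\J_{i_0}$. Either way $\sum_{j\in\J_{i_0}}\mu'_j>\bar{\rho}$, so $H(\bm{\mu}')>1$, contradicting feasibility \eqref{eq:barrho2}; thus no such $\alpha'$ exists and $[\alpha^{*},\bm{\mu}^{*}]$ is optimal.

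The step I expect to be the main obstacle is this last propagation when the maximally loaded RRH serves none of the scaled UEs: turning the strict increase on $\S$ into a strict increase of the \emph{maximum} RRH load requires that the interference graph links the RRHs serving $\S$ to $i_0$, i.e. that the relevant channel gains are positive. This is the only place where the coupling structure, rather than the bare SIF axioms, is used, and I would make the positivity of the interference contribution explicit there; all remaining steps are routine once the monotone-iteration bound $\bm{\mu}'\geq\bm{\mu}^{*}$ is established.
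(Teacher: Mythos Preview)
Your proposal is correct and follows the same contradiction strategy as the paper: assume a strictly larger $\alpha'$ is feasible, argue the resource vector must dominate $\bm{\mu}^{*}$, and conclude the load constraint is violated. Your execution is more rigorous than the paper's. Where the paper simply asserts that ``the vector $\bm{\mu}$ must be increased'' and jumps to $H(\bm{\mu}')>1$, you make this precise by (i) recognising $\bm{\mu}^{*}$ as the unique SIF fixed point of $\vec{g}_{\alpha^{*}}$, (ii) using the super-solution property $\bm{\mu}'\geq\vec{g}_{\alpha^{*}}(\bm{\mu}')$ together with monotone iteration to obtain $\bm{\mu}'\geq\bm{\mu}^{*}$, and (iii) extracting a strict increase on every $j\in\S$.

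You also correctly isolate the one genuine subtlety that the paper's short proof glosses over entirely: when the fully loaded RRH $i_0$ serves no UE in $\S$, turning the strict increase on $\S$ into a strict increase of $\sum_{j\in\J_{i_0}}\mu_j$ requires that the interference coupling actually links the RRHs serving $\S$ to the UEs in $\J_{i_0}$, i.e.\ positivity of the relevant channel gains. The paper's proof tacitly relies on this connectivity but never states it; your identification of this as the sole non-routine step is accurate.
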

\begin{proof}
Suppose \eqref{eq:max2-jinS2} and~\eqref{eq:max2-jnotinS2} hold for all $j\in\J$ as equalities with $[\alpha^{*},\bm{\mu}^{*}]$. Consider any $\alpha'$ ($\alpha'>\alpha^{*}$). Replacing $\alpha^{*}$ by $\alpha'$ in~\eqref{eq:max2-jinS2} causes  \eqref{eq:max2-jinS2} being violated. Thus $\mu^{*}_j$ ($j\in\S$) must increase to have \eqref{eq:jinS2} remains satisfied. Then 
$\vec{f}(\bm{\mu})$ would grow due to its monotonicity, resulting in the violations of~\eqref{eq:max2-jnotinS2}. Therefore, to have \eqref{eq:max2-jinS2} and~\eqref{eq:max2-jnotinS2} remain satisfied, the vector $\bm{\mu}$ must be increased. Denote by $\bm{\mu}'$ the newly obtained resource allocation.
Since $H(\bm{\mu}^{*})=1$, then we must have $H(\bm{\mu}')>1$ which violates some constraint in \eqref{eq:max2-barrho2}. Hence the conclusion.
\end{proof}

Next we derive a solution method that achieves the global optimum of~\eqref{eq:max2}. 
We define function $\vec{F}_{\alpha}$ as follows.
\begin{equation}
\vec{F}_{\alpha}: \bm{\mu}\mapsto \left[\frac{f_1(\bm{\mu})}{\pi_1(\alpha)},\frac{f_2(\bm{\mu})}{\pi_2(\alpha)},\ldots,\frac{f_n(\bm{\mu})}{\pi_n(\alpha)} \right]	
\end{equation}
where
\begin{equation}
\pi_j(\alpha) = \left\{
\begin{array}{ll}
1	&   j\in\S \\
\alpha & \textnormal{otherwise}
\end{array}
\right.
\end{equation}
Note that for any given $\alpha>0$, the function $\vec{F}_{\alpha}$ is an SIF in $\bm{\mu}$. The problem in~\eqref{eq:max2} can be reformulated below.
\begin{equation}
\max_{\alpha\geq 1,\bm{\mu}\geq\bm{0}}\alpha~\textnormal{s.t.}~\alpha\vec{F}_{\alpha}(\bm{\mu})\leq\bm{\mu},H(\bm{\mu})=1.
\label{eq:max3}
\end{equation}

\textcolor[rgb]{0,0,0}{The recursive equations in Theorem~\ref{thm:compute_G} below give the solution method for solving~\eqref{eq:max3} (and equivalently~\eqref{eq:max2}). The optimality of this method is guaranteed by Theorem~\ref{thm:opt}.} The proofs of both Theorem~\ref{thm:compute_G} and Theorem~\ref{thm:opt} are in the Appendix. The symbol ``$\circ$'' denotes the function composition, i.e. $g_1\circ g_2(\mathtt{var}) = g_1(g_2(\mathtt{var}))$. 
\begin{theorem}
%$[\alpha^{*},\bm{\mu}^{*}] = \lim_{k\rightarrow\infty} \vec{G}^{k}(1,\bm{\mu})$ for $\bm{\mu}=\vec{f}(\bm{\mu})$.
Denote $[\alpha^{*},\bm{\mu}^{*}]=\lim_{k\rightarrow\infty}[\alpha^{(k)},\bm{\mu}^{(k)}]$, where 
\begin{equation}
\alpha^{(k)}=\frac{1}{H\circ\vec{F}_{\alpha^{(k-1)}}(\bm{\mu}^{(k-1)})},~k\geq 1
\label{eq:G_iter}
\end{equation}
and
\begin{equation}
\bm{\mu}^{(k)}=\frac{\vec{F}_{\alpha^{(k-1)}}(\bm{\mu}^{(k-1)})}{H\circ\vec{F}_{\alpha^{(k-1)}}(\bm{\mu}^{(k-1)})},~k\geq 1,
\label{eq:G_iter2}
\end{equation}
with $\alpha^{(0)}>0$ and $\bm{\mu}^{(0)}\in\mathbb{R}^{n}_{+}$. Then $H(\bm{\mu}^{*})=1$ holds.
\label{thm:compute_G}
\end{theorem}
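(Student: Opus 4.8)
The plan is to exploit two elementary structural properties of the map $H$: it is continuous and positively homogeneous of degree one. Indeed, $H(\bm{\mu}) = \frac{1}{\bar{\rho}}\max_{i\in\R}\sum_{j\in\J_i}\mu_j$ is a maximum of finitely many linear forms, hence continuous on $\mathbb{R}^{n}_{+}$, and for any scalar $c>0$ one has $H(c\bm{\mu}) = c\,H(\bm{\mu})$. The whole point of the normalization built into \eqref{eq:G_iter}--\eqref{eq:G_iter2} is precisely to project every iterate onto the level set $\{\bm{\mu}:H(\bm{\mu})=1\}$; the limit then inherits this membership by continuity.

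First I would verify that the recursion is well defined, i.e.\ that the denominator $H\circ\vec{F}_{\alpha^{(k-1)}}(\bm{\mu}^{(k-1)})$ is strictly positive at every step. Since each $f_j$ takes values in $\mathbb{R}_{++}$ and $\pi_j(\alpha^{(k-1)})>0$ whenever $\alpha^{(k-1)}>0$, the vector $\vec{F}_{\alpha^{(k-1)}}(\bm{\mu}^{(k-1)})$ is componentwise positive; as at least one set $\J_i$ is non-empty (every UE must be served by some RRH), the inner sums defining $H$ are positive and so is their maximum. An induction starting from the given $\alpha^{(0)}>0$, using $\alpha^{(k)}=1/H\circ\vec{F}_{\alpha^{(k-1)}}(\bm{\mu}^{(k-1)})>0$, confirms $\alpha^{(k)}>0$ for all $k$, so no division by zero ever occurs.

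The key step is then immediate. Writing $\bm{\nu}^{(k)}:=\vec{F}_{\alpha^{(k-1)}}(\bm{\mu}^{(k-1)})$ and $c^{(k)}:=H(\bm{\nu}^{(k)})>0$, the update reads $\bm{\mu}^{(k)}=\bm{\nu}^{(k)}/c^{(k)}$, so homogeneity gives
\[
H(\bm{\mu}^{(k)}) = H\!\left(\frac{\bm{\nu}^{(k)}}{c^{(k)}}\right) = \frac{H(\bm{\nu}^{(k)})}{c^{(k)}} = 1,\qquad k\ge 1.
\]
Because $[\alpha^{*},\bm{\mu}^{*}]$ is by hypothesis the limit of $[\alpha^{(k)},\bm{\mu}^{(k)}]$ and $H$ is continuous, passing to the limit yields $H(\bm{\mu}^{*}) = \lim_{k\to\infty}H(\bm{\mu}^{(k)}) = 1$, which is the claim.

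The part I expect to carry the real weight is not the equality $H(\bm{\mu}^*)=1$ itself, which is forced by the two lines above, but the existence of the limit that the ``$\lim$'' notation in the statement presupposes. If convergence must be argued rather than granted, this is where I would concentrate effort: for each frozen $\alpha$ the map $\vec{F}_{\alpha}$ is a standard interference function, so the normalized recursion behaves like a nonlinear power iteration governed by the concave Perron--Frobenius framework of~\cite{Krause:2001wd}, and the genuine difficulty is that $\alpha^{(k)}$ is updated simultaneously with $\bm{\mu}^{(k)}$. I would address this by showing the scalar sequence $\alpha^{(k)}$ is monotone and bounded, hence convergent, and that the associated SIFs stabilize so that the vector iteration settles, after which the continuity argument above closes the proof. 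Since the theorem as stated asserts only $H(\bm{\mu}^*)=1$, the homogeneity-plus-continuity argument is sufficient once the limit is in hand.
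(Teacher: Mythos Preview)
Your homogeneity-plus-continuity argument for the equality $H(\bm{\mu}^{*})=1$ is correct and is in fact more direct than what appears in the paper: once you note $H(\bm{\mu}^{(k)})=1$ for every $k\ge 1$, the claim follows immediately from continuity of $H$. The paper never isolates this observation; instead it recovers $H(\bm{\mu}^{*})=1$ as a byproduct of the Perron--Frobenius characterization of the limit (Lemma~\ref{lma:rho_alpha}).

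Where the two approaches diverge substantially is on convergence. In the paper, the ``Denote $\lim$'' is not a hypothesis but part of what is being established, and most of the work goes there. The paper's route is \emph{not} via monotonicity of $\alpha^{(k)}$, which it never claims and which would be delicate to verify directly. Rather, it shows: (i) for each fixed $\alpha$, the normalized $\bm{\mu}$-iteration is a nonlinear power iteration converging to an eigenvector $\bm{\mu}_{\alpha}$ with $H(\bm{\mu}_{\alpha})=1$ (Lemma~\ref{lma:rho_alpha}, via~\cite{Krause:2001wd}); (ii) for each fixed $\bm{\mu}$, the map $\alpha\mapsto 1/H\circ\vec{F}_{\alpha}(\bm{\mu})$ is an SIF (Lemma~\ref{lma:P}); and (iii) the composite map $T(\alpha)=1/H\circ\vec{F}_{\alpha}(\bm{\mu}_{\alpha})$ is again an SIF (Lemma~\ref{lma:sif}). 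The joint update~\eqref{eq:G_iter}--\eqref{eq:G_iter2} is then cast as an asynchronous fixed-point iteration of two SIFs, and convergence is imported from~\cite{bertsekas1989parallel}. Your sketch correctly identifies the concave Perron--Frobenius ingredient for the $\bm{\mu}$-side, but the monotonicity-and-boundedness plan for the $\alpha$-side would need a separate argument that neither you nor the paper provides; the SIF-plus-asynchronous-iteration machinery is what actually closes the gap.
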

\begin{theorem}
$[\alpha^{*},\bm{\mu}^{*}]$ in Theorem~\ref{thm:compute_G} is optimal to~\eqref{eq:max2}.	
\label{thm:opt}
\end{theorem}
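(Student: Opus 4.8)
The plan is to identify the limit point $[\alpha^{*},\bm{\mu}^{*}]$ as a feasible solution of~\eqref{eq:max2} at which the sufficient optimality conditions of Lemma~\ref{lma:sufficiency} hold, and then simply invoke that lemma. Two facts are already in hand: Theorem~\ref{thm:compute_G} guarantees that the iterations~\eqref{eq:G_iter}--\eqref{eq:G_iter2} converge to $[\alpha^{*},\bm{\mu}^{*}]$ with $H(\bm{\mu}^{*})=1$, and Lemma~\ref{lma:sufficiency} asserts that any point for which~\eqref{eq:jinS2} and~\eqref{eq:jnotinS2} hold with equality and $H=1$ is optimal. Hence the only real work is to show that the limit satisfies both equalities; everything else is inherited from these prior results.

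First I would rewrite the coupled update in a convenient form. Substituting~\eqref{eq:G_iter} into~\eqref{eq:G_iter2} shows that every iterate obeys $\bm{\mu}^{(k)}=\alpha^{(k)}\vec{F}_{\alpha^{(k-1)}}(\bm{\mu}^{(k-1)})$. Because each $f_j$ is continuous, the map $\vec{F}_{\alpha}(\bm{\mu})$ is jointly continuous in $(\alpha,\bm{\mu})$ for $\alpha>0$, and $H$ is continuous as a maximum of finitely many linear forms. Passing to the limit $k\to\infty$, and using that $\alpha^{(k)}$ and $\alpha^{(k-1)}$ share the common limit $\alpha^{*}$ while $\bm{\mu}^{(k)}$ and $\bm{\mu}^{(k-1)}$ share the common limit $\bm{\mu}^{*}$, I obtain the fixed-point identity $\bm{\mu}^{*}=\alpha^{*}\vec{F}_{\alpha^{*}}(\bm{\mu}^{*})$.

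Next I would decode this identity coordinate by coordinate through $\pi_j$. For $j\in\S$ we have $\pi_j(\alpha^{*})=1$, so $\mu^{*}_j=\alpha^{*}f_j(\bm{\mu}^{*})$, which is exactly~\eqref{eq:jinS2} at equality; for $j\in\J\backslash\S$ we have $\pi_j(\alpha^{*})=\alpha^{*}$, so $\mu^{*}_j=\alpha^{*}f_j(\bm{\mu}^{*})/\alpha^{*}=f_j(\bm{\mu}^{*})$, which is~\eqref{eq:jnotinS2} at equality. Since $H(\bm{\mu}^{*})=1$ by Theorem~\ref{thm:compute_G}, the load constraints $\sum_{j\in\J_i}\mu^{*}_j\leq\bar{\rho}$ of~\eqref{eq:barrho2} hold, and $\bm{\mu}^{*}\geq\bm{0}$ is preserved throughout the iterations because $\vec{F}_{\alpha}$ and $H$ are strictly positive on the positive orthant. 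Thus $[\alpha^{*},\bm{\mu}^{*}]$ is feasible and meets every hypothesis of Lemma~\ref{lma:sufficiency}, which delivers optimality.

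The step I expect to be the main obstacle is the passage to the limit, i.e.\ justifying rigorously that the limit point is a genuine fixed point of the coupled map. The delicacy is that $\alpha^{(k)}$ and $\alpha^{(k-1)}$ appear at different positions in $\bm{\mu}^{(k)}=\alpha^{(k)}\vec{F}_{\alpha^{(k-1)}}(\bm{\mu}^{(k-1)})$ and may be collapsed to the same $\alpha^{*}$ only because both subsequences converge, which is precisely what Theorem~\ref{thm:compute_G} supplies, together with the joint continuity of $\vec{F}$ in its scalar parameter. A minor additional point is to confirm $\alpha^{*}\geq 1$ so that the limit is admissible for~\eqref{eq:max2} and not merely for the fixed-point system; this holds under the standing assumption that~\eqref{eq:max2} is feasible, whereas $\bm{\mu}^{*}\geq\bm{0}$ is immediate from the non-negativity of the iterates.
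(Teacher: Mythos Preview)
Your proposal is correct and follows essentially the same route as the paper: establish the fixed-point identity $\bm{\mu}^{*}=\alpha^{*}\vec{F}_{\alpha^{*}}(\bm{\mu}^{*})$, decode it coordinate-wise via $\pi_j$ to recover~\eqref{eq:jinS2} and~\eqref{eq:jnotinS2} at equality, and then invoke Lemma~\ref{lma:sufficiency} together with $H(\bm{\mu}^{*})=1$. The only tactical difference is that you obtain the identity by passing to the limit in the recursion $\bm{\mu}^{(k)}=\alpha^{(k)}\vec{F}_{\alpha^{(k-1)}}(\bm{\mu}^{(k-1)})$ using continuity, whereas the paper reaches it through the eigenpair characterization of Lemma~\ref{lma:rho_alpha}, computing $\lambda_{\alpha^{*}}=1/\alpha^{*}$ from $H(\bm{\mu}^{*})=1$; your argument is slightly more direct, the paper's ties more explicitly into the Perron--Frobenius machinery used for Theorem~\ref{thm:compute_G}.
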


Theorem~\ref{thm:compute_G} and Theorem~\ref{thm:opt} guarantee that for an arbitrary set of UEs $\S$ in the network, one can iteratively compute the maximum demand scaling factor $\alpha^{*}$ for $\S$. As a special case, when $\pi_j(\alpha)=1$ for all $j\in\J$, solving the problem in~\eqref{eq:max3} is to find the eigenvalue $\bm{\mu}$ and the eigenvector $1/\alpha$ of the equation system $\vec{f}(\bm{\mu})=(1/\alpha)\bm{\mu}$ such that $H(\bm{\mu})=1$. \textcolor[rgb]{0,0,0}{The iterative solution method in Theorem~\ref{thm:compute_G} serves as a subroutine for solving \textit{MaxD}, shown in Section~\ref{subsec:algorithm} later.}

\subsection{CoMP Selection Optimization}
\label{subsec:association}

We show a lemma for optimizing the CoMP selection.
The detailed proof of the lemma is based on~\cite[Theorem 3]{7343476}. The following notations are introduced. We consider two matrices $\bm{\kappa}$ and $\bm{\kappa}'$ with the following relationship: Each column in $\bm{\kappa}$ has at least one non-zero element (meaning that every UE is associated to at least one RRH); There exists exactly one RRH-UE pair ($i,j$) such that $\kappa_{ij}=0$ and $\kappa'_{ij}=1$, respectively; For any other pair ($r,q$), $\kappa'_{rq}=\kappa_{rq}$. Note that the only difference between the two associations $\bm{\kappa}$ and $\bm{\kappa}'$ is that $\bm{\kappa}'$ includes the CoMP link from RRH $i$ to UE $j$ while $\bm{\kappa}$ does not. 
Denote by $\bm{\mu}^{*}$ and $\bm{\mu}'^{*}$ the corresponding optimal resource allocations obtained respectively by $\bm{\kappa} $ and $\bm{\kappa}'$ (i.e. $\bm{\mu}^{*}=\vec{F}_{\alpha}(\bm{\mu}^{*},\bm{\kappa})$ and $\bm{\mu}'^{*}=\vec{F}_{\alpha}(\bm{\mu}'^{*},\bm{\kappa}')$). 
For the sake of presentation, given some $\alpha$ ($\alpha>0$), we represent $\rho_i$ as a function of the resource allocation $\bm{\mu}$ and the RRH-UE association $\bm{\kappa}$:
\begin{equation}
\rho_i:[\bm{\mu},\bm{\kappa}]\mapsto \sum_{j\in\J_i(\bm{\kappa})} F_{\alpha,j}(\bm{\mu},\bm{\kappa}).
\end{equation}
Consider two sequences $\bm{\mu}^{(0)},\bm{\mu}^{(1)},\ldots,\bm{\mu}^{(\infty)}$ and $\bm{\rho}^{(0)},\bm{\rho}^{(1)},\ldots,\bm{\rho}^{(\infty)}$, where $\bm{\mu}^{(k)}=\vec{f}(\bm{\rho}^{(k-1)},\bm{\kappa}')$, $\bm{\rho}^{(k)}=\bm{\rho}(\bm{\mu}^{(k-1)},\bm{\kappa})$ for $k\geq 1$, with $\bm{\mu}^{(0)}=\bm{\mu}^{*}$ and $\bm{\rho}^{(0)}=\bm{\rho}(\bm{\mu}^{(0)},\bm{\kappa})$. The convergence of the two sequences is guaranteed as the function $\vec{F}_{\alpha}$ with fixed $\alpha$ is an SIF in both $\bm{\mu}$ and $\bm{\rho}$ \cite{7343476}. We provide the following lemma.
\begin{lemma}
$\bm{\rho}(\bm{\mu}'^{*},\bm{\kappa}')\leq\bm{\rho}(\bm{\mu}^{*},\bm{\kappa})$ if for any $k\geq 1$ we have $\rho_{i}(\bm{\mu}^{(k)},\bm{\kappa}')\leq \rho_{i}^{(k)}$. 
\label{lma:adding}
\end{lemma}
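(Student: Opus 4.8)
The plan is to run the alternating iteration of the statement starting from the old optimum $\bm{\mu}^{(0)}=\bm{\mu}^{*}$, argue that it produces a monotonically non-increasing sequence, and then take limits, using that adding the single link $i\to j$ changes the two associations only through $\J_i(\bm{\kappa}')=\J_i(\bm{\kappa})\cup\{j\}$ and $\R_j(\bm{\kappa}')=\R_j(\bm{\kappa})\cup\{i\}$, while $\J_r$ and $\R_q$ are identical under $\bm{\kappa}$ and $\bm{\kappa}'$ for every $r\neq i$ and $q\neq j$. Because demands depend on $\bm{\mu}$ only through the load vector, both $\bm{\mu}^{*}=\vec{f}(\bm{\rho}^{(0)},\bm{\kappa})$ and $\bm{\mu}^{(1)}=\vec{f}(\bm{\rho}^{(0)},\bm{\kappa}')$ are evaluated at the \emph{same} load $\bm{\rho}^{(0)}=\bm{\rho}(\bm{\mu}^{*},\bm{\kappa})$, which is what makes a clean first comparison possible.

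First I would establish the one-step decrease $\bm{\mu}^{(1)}\leq\bm{\mu}^{(0)}$. For every UE $k\neq j$ the serving set $\R_k$ is untouched and the interference it sees from RRH $i$ is fixed by the common input load $\rho^{(0)}_i$, so $f_k$ is the same under both associations and $\mu^{(1)}_k=\mu^{*}_k$. For UE $j$, joining $i$ to its cooperating set raises its SINR---the coherent-combining numerator grows and the interference term $w_{ij}$ is removed from its denominator---so $f_j$ strictly drops and $\mu^{(1)}_j\leq\mu^{*}_j$; hence $\bm{\mu}^{(1)}\leq\bm{\mu}^{(0)}$ componentwise. Next I would transfer this to the loads and set up the induction. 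For each RRH $r\neq i$ the served set $\J_r$ is identical under the two associations, so $\bm{\mu}^{(1)}\leq\bm{\mu}^{(0)}$ immediately gives $\rho_r(\bm{\mu}^{(1)},\bm{\kappa}')\leq\rho^{(0)}_r$; the only RRH for which the inequality is not automatic is $i$, because its served set now carries the extra term $\mu^{(1)}_j$, and this is precisely where the hypothesis $\rho_i(\bm{\mu}^{(k)},\bm{\kappa}')\leq\rho^{(k)}_i$ is invoked (here with $k=1$) to conclude $\rho_i(\bm{\mu}^{(1)},\bm{\kappa}')\leq\rho^{(0)}_i$. Thus $\bm{\rho}^{(1)}\leq\bm{\rho}^{(0)}$, and feeding the smaller load back through the SIF $\vec{f}(\cdot,\bm{\kappa}')$ and using its monotonicity (property~1 of the SIF lemma) yields $\bm{\mu}^{(2)}\leq\bm{\mu}^{(1)}$; repeating the load step with the hypothesis at $k=2$ gives $\bm{\rho}^{(2)}\leq\bm{\rho}^{(1)}$. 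By induction both $\{\bm{\mu}^{(k)}\}$ and $\{\bm{\rho}^{(k)}\}$ are monotonically non-increasing and bounded below by $\bm{0}$.

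Finally I would pass to the limit. Since the map is an SIF in both $\bm{\mu}$ and $\bm{\rho}$, the convergence recalled in Section~\ref{subsec:basics} and in \cite{7343476} applies, and the monotone sequences settle at the fixed point induced by $\bm{\kappa}'$, i.e.\ ${\bm{\mu}'}^{*}$ with load $\bm{\rho}({\bm{\mu}'}^{*},\bm{\kappa}')$. As the sequence descends from $\bm{\rho}^{(0)}=\bm{\rho}(\bm{\mu}^{*},\bm{\kappa})$, taking limits yields $\bm{\rho}({\bm{\mu}'}^{*},\bm{\kappa}')=\lim_{k\to\infty}\bm{\rho}^{(k)}\leq\bm{\rho}^{(0)}=\bm{\rho}(\bm{\mu}^{*},\bm{\kappa})$, which is the claim. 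I expect the main obstacle to be the competing effects at RRH $i$: the newly served UE $j$ adds a load contribution, while the improved SINR of $j$ lowers resource demands throughout the coupled network, and it is not a priori clear which one dominates. The per-iterate assumption $\rho_i(\bm{\mu}^{(k)},\bm{\kappa}')\leq\rho^{(k)}_i$ is exactly the condition that neutralizes this tension at the modified RRH, and the role of the monotone SIF convergence is to upgrade this per-step control into the fixed-point comparison; verifying that the alternating iteration's limit indeed coincides with the $\bm{\kappa}'$-optimum ${\bm{\mu}'}^{*}$ is the other point that needs care.
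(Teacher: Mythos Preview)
Your monotone-descent strategy via SIF properties is exactly the argument the paper intends: in fact the paper does not give a self-contained proof of this lemma at all, it simply states that ``the detailed proof of the lemma is based on~\cite[Theorem~3]{7343476}''. The induction you outline (compare $\bm{\mu}^{(1)}$ to $\bm{\mu}^{(0)}$ at the common load, use the hypothesis to control the single problematic coordinate $\rho_i$, and propagate via monotonicity) is the natural reconstruction of that referenced argument, so on the level of approach you are aligned with the paper.

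There is one point where your write-up and the paper's definitions diverge, and it is exactly the issue you yourself flag at the end. The paper defines $\bm{\rho}^{(k)}=\bm{\rho}(\bm{\mu}^{(k-1)},\bm{\kappa})$, i.e.\ the load step of the alternating iteration uses the \emph{old} association $\bm{\kappa}$, while only the $\bm{\mu}$-step uses $\bm{\kappa}'$. Consequently the limit $\bm{\mu}^{\infty}$ of this hybrid iteration satisfies $\bm{\mu}^{\infty}=\vec{f}\bigl(\bm{\rho}(\bm{\mu}^{\infty},\bm{\kappa}),\bm{\kappa}'\bigr)$, which is \emph{not} the pure $\bm{\kappa}'$-fixed-point equation; your sentence ``the monotone sequences settle at the fixed point induced by $\bm{\kappa}'$, i.e.\ $\bm{\mu}'^{*}$'' therefore does not follow directly. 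The clean way to close the gap is a one-line comparison: by the hypothesis in the limit, $\rho_i(\bm{\mu}^{\infty},\bm{\kappa}')\leq\rho_i(\bm{\mu}^{\infty},\bm{\kappa})$, and for $r\neq i$ the two loads coincide, so $\bm{\rho}(\bm{\mu}^{\infty},\bm{\kappa}')\leq\bm{\rho}(\bm{\mu}^{\infty},\bm{\kappa})$; feeding this through $\vec{f}(\cdot,\bm{\kappa}')$ gives $\vec{f}\bigl(\bm{\rho}(\bm{\mu}^{\infty},\bm{\kappa}'),\bm{\kappa}'\bigr)\leq\bm{\mu}^{\infty}$, i.e.\ $\bm{\mu}^{\infty}$ is a super-solution of the $\bm{\kappa}'$-system. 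The standard SIF comparison then yields $\bm{\mu}'^{*}\leq\bm{\mu}^{\infty}$, and hence $\bm{\rho}(\bm{\mu}'^{*},\bm{\kappa}')\leq\bm{\rho}(\bm{\mu}^{\infty},\bm{\kappa}')\leq\bm{\rho}^{(0)}=\bm{\rho}(\bm{\mu}^{*},\bm{\kappa})$. With that extra step your argument is complete and coincides with the route the paper points to.
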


Lemma~\ref{lma:adding} serves as a sufficient condition for RRH load improvement by CoMP. Specifically, in order to check whether adding a CoMP link between RRH $i$ and UE $j$ would reduce the load levels of RRHs, we iteratively construct the two sequences $\bm{\mu}^{(0)},\bm{\mu}^{(1)},\ldots,\bm{\mu}^{(\infty)}$ and $\bm{\rho}^{(0)},\bm{\rho}^{(1)},\ldots,\bm{\rho}^{(\infty)}$. Once there exists $k\geq 1$ such that $\rho_{i}(\bm{\mu}^{(k)},\bm{\kappa}')\leq \rho_{i}^{(k)}$, we conclude that the bits demand by all UEs can be satisfied with lower time-frequency resource consumption (i.e. the load of RRH) under the association $\bm{\kappa}'$ than $\bm{\kappa}$. In Section~\ref{subsec:algorithm} below, we show that the condition in Lemma~\ref{lma:adding} can be incorporated with the solution method in Theorem~\ref{thm:compute_G}, to form our joint optimization algorithm.

\subsection{Algorithm}
\label{subsec:algorithm}

The theoretical properties derived in Section~\ref{subsec:demand_scaling} and Section~\ref{subsec:association} enable an algorithm for \textit{MaxD}. To be more specific, given any RRH-UE association $\bm{\kappa}$, one can obtain the corresponding maximum scaling factor $\alpha^{*}$ together with $\bm{\mu}^{*}$ iteratively by~\eqref{eq:G_iter} and~\eqref{eq:G_iter2} in Theorem~\ref{thm:compute_G}. At the convergence, $H(\bm{\mu}^{*})=1$ holds by Theorem~\ref{thm:compute_G}. Taking one step further, if we fix $\alpha^{*}$ and consider all the candidate RRHs for each UE, 
\textcolor[rgb]{0,0,0}{using
Lemma~\ref{lma:adding} (in Section~\ref{subsec:association}) enables us to determine whether a new association $\bm{\kappa}'$ that includes some newly added CoMP link would lead to load improvement. If yes, the corresponding resource allocation $\bm{\mu'}$ under $\bm{\kappa}'$ must have $H(\bm{\mu}')<1$.  Then by Lemma~\ref{lma:necessity}, the current solution $[\alpha^{*},\bm{\mu}']$ is not optimal. 
Once the load is improved, applying \eqref{eq:G_iter} and~\eqref{eq:G_iter2} under $\bm{\kappa}'$ again guarantees an overall improvement.
We remark that Algorithm~\ref{alg:MaxD} works in an online manner in terms of the candidate RRH-UE pairs. To be specific, in each iteration of the outer loop, once a new CoMP link is added, the demand scaling factor $\alpha$ is guaranteed to be increased by the end of this iteration, based on the discussion above.
}
This process is detailed in Algorithm~\ref{alg:MaxD}.
\begin{algorithm}[!ht]
\KwIn{$\bm{\kappa}^{(0)},\S,\epsilon>0$, $\bar{\rho}$}
\KwOut{$\bm{\kappa}^{*},\bm{\mu}^{*},\alpha^{*}$}
$\alpha^{(0)}\leftarrow 1$; (Or other positive number) \\
$\bm{\mu}^{(0)}\leftarrow \vec{0}$; (Or other non-negative vector) \\
\Repeat{$\bm{\kappa}^{(c)}=\bm{\kappa}^{(c-1)}$}{\label{l1:start}
	\For{$i\leftarrow 1$ \textnormal{to} $m$, $j\leftarrow 1$ \textnormal{to} $n$, \textnormal{with} $\kappa_{ij}^{(c)}=0$}{
		$k\leftarrow 1$\;
		\Repeat{$\norm{\bm{\mu}^{(k)}-\bm{\mu}^{(k-1)}}<\epsilon$}{\label{l3:start}
			$\bm{\kappa}^{(c)}\leftarrow\bm{\kappa}^{(c-1)}$\;
			$\bm{\kappa}'\leftarrow\bm{\kappa}^{(c)}$\;
			$\kappa'_{ij}\leftarrow 1$\;\label{step:add}
			$\bm{\rho}^{(k)}\leftarrow\bm{\rho}(\bm{\mu}^{(k-1)},\bm{\kappa}^{(c)})$\;\label{step:eval}
			$\bm{\mu}^{(k)}\leftarrow\vec{F}_{\alpha}(\bm{\rho}^{(k)},\bm{\kappa}')$\;
			\uIf{$\rho_i(\bm{\mu}^{(k)},\bm{\kappa}')\leq\rho_i^{(k)}$}{\label{l3:adding}
				$\bm{\kappa}^{(c)}\leftarrow\bm{\kappa}'$\;
			}
%			\uElseIf{$F_{\alpha,j}(\bm{\rho}^{(k)},\bm{\kappa}')>\mu_j^{(k)}$}{\label{l3:adding2}
%				\textbf{break};
%			}
			$k\leftarrow k+1$\;
		}\label{l3:end}
		$h\leftarrow 1$\;\label{l:alpha0}
		\Repeat{$\norm{[\alpha^{(h)},\bm{\mu}^{(h)}]-[\alpha^{(h-1)},\bm{\mu}^{(h-1)}]}<\epsilon$}{\label{l3:start2}
			$\alpha^{(h)}\leftarrow\frac{1}{H\circ\vec{F}_{\alpha^{(h-1)}}(\bm{\mu}^{(h-1)},\bm{\kappa}^{(c)})}$\;
			$\bm{\mu}^{(h)}\leftarrow\frac{\vec{F}_{\alpha^{(h-1)}}(\bm{\mu}^{(h-1)},\bm{\kappa}^{(c)})}{H\circ\vec{F}_{\alpha^{(h-1)}}(\bm{\mu}^{(h-1)},\bm{\kappa}^{(c)})}$\;
			$h\leftarrow h+1$\;
		}\label{l3:end2}
		$\alpha\leftarrow\alpha^{(h)}$\;
		$\bm{\mu}\leftarrow\bm{\mu}^{(h)}$\;
	}
	$c\leftarrow c+1$\;
}\label{l1:end}
$\bm{\kappa}^{*}\leftarrow\bm{\kappa}^{(c)}$; $\bm{\mu}^{*}\leftarrow\bm{\mu}$; $\alpha^{*}\leftarrow\alpha$\;
\caption{Joint Demand Scaling and CoMP Selection}
\label{alg:MaxD}
\end{algorithm}

The input of Algorithm~\ref{alg:MaxD} consists of an initial RRH-UE association $\bm{\kappa}^{(0)}$, a set $\S$ ($\S\subseteq\J$) of UEs for demand scaling, and a positive value $\epsilon$ that is the tolerance of convergence. For the output, Algorithm~\ref{alg:MaxD} gives the optimized RRH-UE association $\bm{\kappa}^{*}$,  the corresponding optimal resource allocation $\bm{\mu}^{*}$, and the demand scaling factor $\alpha^{*}$. The algorithm goes through all the candidate RRH-UE pairs for CoMP. For each candidate RRH-UE pair, the algorithm applies the partial optimality condition in Lemma~\ref{lma:adding}. Specifically, if the condition in Line~\ref{l3:adding} is satisfied for any RRH-UE pair ($i,j$), then adding a CoMP link ($i,j$) improves the load levels.  When the loop in Lines \ref{l3:start}--\ref{l3:end} ends, the newly optimized association $\bm{\kappa}^{(c)}$ is obtained. The computational method in Theorem~\ref{thm:compute_G} is implemented in Lines~\ref{l:alpha0}--\ref{l3:end2}, by which we obtain the optimal demand scaling for the set $\S$ and the corresponding resource allocation to satisfy the scaled demands under $\bm{\kappa}^{(c)}$.

The complexity of Algorithm~\ref{alg:MaxD} is analyzed as follows. For simplicity, we assume the operations on vectors are atomic and can be done in $O(1)$. By~\cite{6353394} (along with our proof for Theorem~\ref{thm:compute_G}), the iterations in Lines~\ref{l3:start}--\ref{l3:end} and Lines~\ref{l3:start2}--\ref{l3:end2} have linear convergence such that the complexity of the two loops (i.e. the required number of iterations) with tolerance $\epsilon$ is $O(\log\frac{1}{\epsilon})$~\cite[Page~37]{nesterov2013introductory}. Besides, the outer for-loop runs in $O(mn)$, and is executed with a maximum of $m\times n$ times\footnote{Because there are at most $m\times n$ links that can be added.}. Hence the total complexity is $O(m^2n^2\log\frac{1}{\epsilon})$. \textcolor[rgb]{0,0,0}{We remark that our algorithm is scalable from the computational theory perspective, since it is polynomial in the number of RRHs and UEs.} The tolerance parameter $\epsilon$ can be selected in an on-demand manner: the smaller the value of $\epsilon$, the higher the accuracy of the algorithm, meanwhile requiring more iterations. Numerically, the impact of the tolerance parameter $\epsilon$ on the algorithm convergence is discussed later in Section~\ref{subsec:convergence}.

\subsection{\textcolor[rgb]{0,0,0}{Priority-aware Per-user Rate Optimization}}
\label{subsec:priority}
\textcolor[rgb]{0,0,0}{We show in this subsection how Algorithm~\ref{alg:MaxD} applies to optimizing per-user rate by taking into account priority. For the sake of presentation, we impose $\K$ to be an ordered set consisting of candidate UEs for rate enhancement. The UEs in $\K$ are arranged in descending order according to their priority. Without loss of generality, suppose $\K = \{1,2,\ldots K\}$. For each UE $k$ in $\K$, we assign a budget $\Delta_k$ for resource allocation. That is, $\Delta_k$ is the affordable increase of the maximum RRH load level for enhancing the rate of UE $k$. Denote by $\bar{\rho}_0$ the maximum RRH load level initially.}

\textcolor[rgb]{0,0,0}{The optimization procedure is as follows. We first run Algorithm~\ref{alg:MaxD} with $\bar{\rho}=\bar{\rho}_0+\Delta_1$ and $\S=\{1\}$. At convergence, the maximum load of RRHs equals $\Delta_1$, and the rate of UE $1$ after optimization is $\alpha^{*}d_1$. We then replace the original base demand $d_1$ by $\alpha^{*}d_1$ for UE 1, and run Algorithm~\ref{alg:MaxD} with $\bar{\rho}=\bar{\rho}_0+\Delta_1+\Delta_2$ and $\S=\{2\}$ for UE 2, and so on. The process repeats until we reach UE $K$. At the end, the delivered demand of all UEs in $\K$ would be enhanced, \textcolor[rgb]{0,0,0}{with the maximum RRH load being $\bar{\rho}=\rho_0 + \sum_{k\in\K}\Delta_k$}.}

\subsection{\textcolor[rgb]{0,0,0}{Upper Bound for \textit{MaxD} under Two Extra Constraints}}
\label{subsec:bound}
\textcolor[rgb]{0,0,0}{
We derive an upper bound for \textit{MaxD} under two extra constraints on CoMP selection. Denote by $\J^{+}_i$ the RRH $i$'s candidate set of served UEs, \textcolor[rgb]{0,0,0}{i.e., the set of UEs that may potentially be served by RRH $i$}. The first constraint is as follows.}
\textcolor[rgb]{0,0,0}{
\begin{equation}
\textnormal{Candidate UE constraints: $\kappa_{ij}=0$ if $j\notin\J^{+}_i$, $i\in\I$.}	
\label{eq:constr_UE}
\end{equation}
}
\textcolor[rgb]{0,0,0}{We impose that for each UE, there is a dominant RRH (e.g. the one with the strongest signal or with the shortest distance to the UE etc.) that serves the UE all the time. We call this dominant RRH the home RRH.} Denote by $\J^{-}_i$ the set of UEs of which the home RRH is $i$. The second constraint is below.
\textcolor[rgb]{0,0,0}{
\begin{equation}
\textnormal{Candidate RRH constraints: $\kappa_{ij}=1$ if $j\in\J^{-}_i$, $i\in\I$.}
\label{eq:constr_RRH}
\end{equation}
}

\textcolor[rgb]{0,0,0}{Define a matrix $\check{\bm{\kappa}}$, where for any RRH $i$ ($i\in\R$) entry $\check{\kappa}_{ij}=1$ if $j\in\J^{+}_i$. 
Consider the optimization problem below.
\begin{subequations}
\begin{alignat}{2}
\quad & 
    \max\limits_{\alpha> 0,\bm{\mu}\geq\bm{0}}~\alpha  \\
    \textnormal{s.t.} \quad & \mu_j\geq\alpha f_j(\bm{\mu},\check{\bm{\kappa}}) \quad j \in\S \label{eq:jinS2}\\
         \quad & \mu_j\geq f_j(\bm{\mu},\check{\bm{\kappa}}) \quad j \in \J\backslash\S \label{eq:jnotinS2}\\
         \quad & \sum_{j\in\J^{-}_i}\mu_j\leq \bar{\rho} \quad i\in\R \label{eq:barrho2}
\end{alignat}
\label{eq:upper_bound}
\end{subequations}
}
\textcolor[rgb]{0,0,0}{
\begin{theorem}
Solving~\eqref{eq:upper_bound} yields an upper bound of $\alpha$ for \textit{MaxD} with constraints \eqref{eq:constr_UE} and \eqref{eq:constr_RRH}.
\end{theorem}
\begin{proof}
The proof is based on \cite[Lemma 12]{7880696}. The derivation below is under under constraints \eqref{eq:constr_UE} and \eqref{eq:constr_RRH}. By \cite[Lemma 12]{7880696}, we conclude that for any CoMP selection $\bm{\kappa}$ , we always have $f_j(\bm{\mu},\bm{\kappa})\geq f_j(\bm{\mu},\check{\bm{\kappa}})$ for any $\bm{\mu}\geq\bm{0}$. In addition, $\sum_{j\in\J_i^{-}}\mu_j\leq \sum_{j\in\J_i}\mu_j$ holds for any $\bm{\mu}\geq\bm{0}$ and any $\bm{\kappa}$ (note that $\J_i$ is related to $\bm{\kappa}$). Therefore, \eqref{eq:upper_bound} is indeed a relaxation of \textit{MaxD} (with \eqref{eq:constr_UE} and \eqref{eq:constr_RRH}), such that solving the former always yields a better (or at least no worse) objective function value than the latter at optimum. Hence the conclusion. 
\end{proof}
}

\textcolor[rgb]{0,0,0}{
We remark that the upper bound is derived for \textit{MaxD} under the two extra constraints \eqref{eq:constr_UE} and \eqref{eq:constr_RRH}, though not proved theoretically to be an exact upper bound of the original problem \textit{MaxD}. On the other hand, with the two constraints, solving \eqref{eq:upper_bound} for obtaining the bound is quite straightforward and computationally efficient.
}

We also remark that, for any specific UE $j$, letting $\S=\{j\}$ and solving the corresponding formulation~\eqref{eq:upper_bound} yield the upper bound of the satisfiable demand of UE $j$. In addition, denote by $\bm{d}_{\S}$ the demands of $\S$ before scaling. Then $|\S|$ is an upper bound for the number of users with demands being no less than $\alpha^{*}\bm{d}_{\S}$, under the worst channel conditions of $\S$.

\section{Simulation}
\label{sec:numerical}

The C-RAN under consideration consists of one hexagonal region, within which multiple UEs and RRHs are randomly deployed. The RRHs are coordinated by the cloud and cooperate with each other for CoMP transmission.
Initially, no UE is in CoMP, and each UE is served by the RRH with the best signal power. The network layout is illustrated in~\figurename~\ref{fig:10scaling}. Parameter settings are given in \tablename~\ref{tab:sim}.
\textcolor[rgb]{0,0,0}{The user demands setting is configured for doing performance benchmarking for the cases with high RRH loads.
In our simulation, the user demand subject to scaling is initially uniform. We use the non-CoMP case as the baseline. In the non-CoMP case, each UE is served with its single RRH that is the home RRH. The user demand is set such that for the baseline, with $\alpha=1.0$, there is at least one RRH $i$ ($i\in\R$) reaching the load limit ($\rho_i=\bar{\rho}$). This demand\footnote{
\textcolor[rgb]{0,0,0}{
The method of obtaining this user demand is as follows. For any initialized $d_j$ ($j\in\J$), applying Theorem~5 directly yields the corresponding demand scaling factor $\alpha$, which leads to that at least one RRH $i$ ($i\in\R$) reaching the load limit. Our expected user demand equals the initial demand multiplied by this $\alpha$.}
} is normalized by $M\times B$. 
}
For clarity, we let $\J_{\text{(CoMP)}}$ denote the set of CoMP UEs in the solution obtained from Algorithm~\ref{alg:MaxD}. The performance is studied by using four metrics for evaluation, defined below, referred to as Metric~\ref{item:metric-alpha}), Metric~\ref{item:metric-CoMP}), Metric~\ref{item:metric-delivery}), and Metric~\ref{item:metric-CoMP-SR}), respectively.  
\begin{table}[!ht]
\centering
\caption{Simulation Parameters.}
\begin{tabular}{ll}
\toprule
\textbf{Parameter} & \textbf{Value} \\
Hexagon radius & \textcolor[rgb]{0,0,0}{$\{500, 100\}$ m} \\
Carrier frequency & $2$ GHz \\
Total bandwidth & $20$ MHz\\
Number of UEs ($|\J|$) & \textcolor[rgb]{0,0,0}{$\{100,570\}$} \\
Number of RRHs ($|\R|$) & \textcolor[rgb]{0,0,0}{$\{10,57\}$} \\
Path loss & COST-231-HATA \\
Shadowing (Log-normal) & $3$ dB standard deviation\\
Fading & Rayleigh flat fading \\
Noise power spectral density & $-173$ dBm/Hz \\
RB bandwidth & $180$ KHz \\
Transmit power on one RB & $400$ mW  \\
User demand distribution & Uniform  \\
Convergence tolerance ($\epsilon$) & $10^{-4}$ \\
\textcolor[rgb]{0,0,0}{Demand scaling proportion ($|\S|/|\J|$)} & \textcolor[rgb]{0,0,0}{$\{10,20,40,60,80,100\}$ (\%) }\\
\bottomrule
\end{tabular}
\label{tab:sim}	
\end{table}
\begin{enumerate}
\item 
\label{item:metric-alpha}
\textit{Improvement of $\alpha$}: The metric is the objective function of \textit{MaxD}. It reflects the capacity improvement of $\S$ (i.e. the user-centric performance).
%The reason is as follows.
%Let $\alpha^{*}$ ($\alpha^{*}\geq 1$) be the maximum scaling factor obtained by Algorithm~\ref{alg:MaxD}. The improvement of CoMP over the baseline is $\frac{\alpha^{*}\sum_{j\in\S}d_j}{1.0\times\sum_{j\in\S}d_j}=\alpha^{*}$. Hence how much we increase $\alpha$ by CoMP directly reflects how much $\S$ benefits from CoMP in demand scaling.
\item 
\label{item:metric-CoMP}
$|\J_{(\text{(CoMP)})}|$: This metric is the number of UEs involved in CoMP. It is used to relate the amount of CoMP to the capacity improvement. 
\item 
\label{item:metric-delivery}
\textit{Increase (abbreviated as ``inc'') of delivered demand}: The metric
is the amount of relative increase of the total delivered demand.
% is defined to be the increase of the total delivered demand for all UEs, compared to that before scaling. 
It reflects the capacity improvement of $\J$, (i.e., the network-wise performance).
\item 
\label{item:metric-CoMP-SR}
$|\J_{\text{(CoMP)}}\cap\S|$: This metric is the number of CoMP UEs inside $\S$. The metric is used for examining how CoMP for UEs inside $\S$ affects the demand scaling performance. One can also infer the number of CoMP UEs outside $\S$ by this metric together with Metric~\ref{item:metric-CoMP}).
\end{enumerate}

In the remaining parts of the section, we show the four metrics as functions of the number of UEs in $\S$ (i.e., $|\S|$), and the RRH load limit $\bar{\rho}$. Since our proposed algorithm employs a user-centric strategy for demand scaling, as for comparison, we refer to the strategy of scaling up demand for all UEs as a fairness-based strategy.

\subsection{Performance with Respect to $|\S|$ (with $\bar{\rho}=1.0$)}
\label{subsec:sim-S/J}

\begin{figure*}[t] 
\centering 
\subfigure[$|\S|=20$\label{fig:10scaling-20}]{\includegraphics[width=0.24\linewidth]{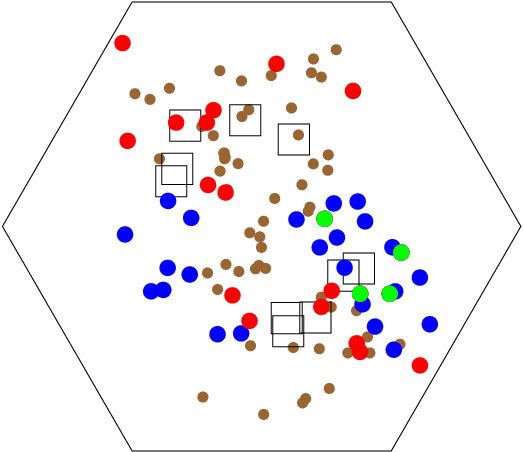}}
\subfigure[$|\S|=40$\label{fig:10scaling-40}]{\includegraphics[width=0.24\linewidth]{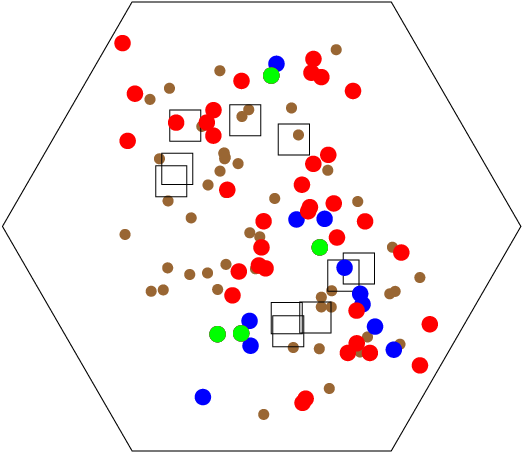}}
\subfigure[$|\S|=60$\label{fig:10scaling-60}]{\includegraphics[width=0.24\linewidth]{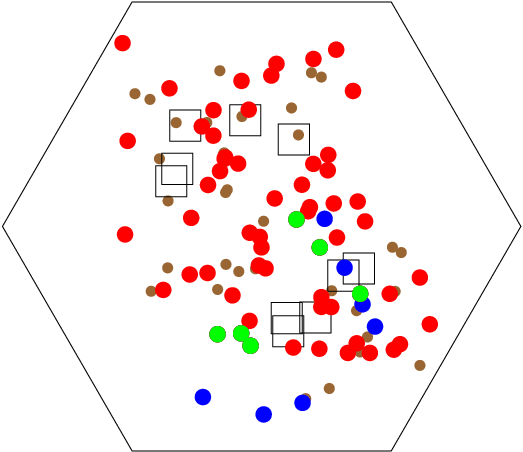}}
\subfigure[$|\S|=80$\label{fig:10scaling-80}]{\includegraphics[width=0.24\linewidth]{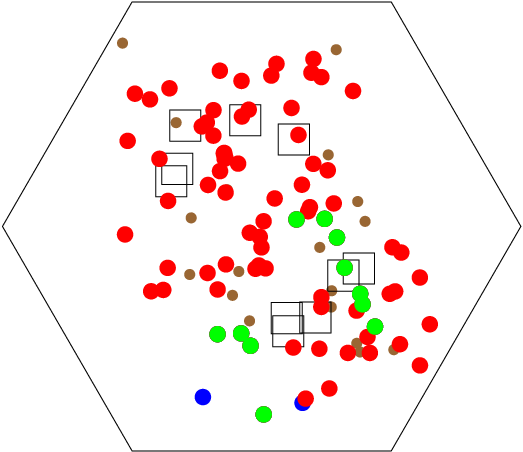}}
\caption{This figure shows some snapshots of the optimized CoMP selections. The UEs and the RRHs are illustrated by dots and rectangles, respectively. The number of UEs in $\S$ are $20$, $40$, $60$, and $80$, respectively. The non-CoMP UEs in $\S$ are marked red. The CoMP UEs in $\S$ are marked green. The CoMP UEs in $\J\backslash\S$ are marked blue. \textcolor[rgb]{0,0,0}{The other UEs are in brown.} Note that the locations of RRHs are randomly generated within the hexagonal region in all our simulations. The figure is obtained from one simulation and is representative for all the simulation results. } 
\label{fig:10scaling} 
\end{figure*}

\begin{figure}[t]
\begin{tikzpicture}
\begin{groupplot}[
    group style={
        group name=my plots,
        group size=1 by 4,
        xlabels at=edge bottom,
        xticklabels at=edge bottom,
        vertical sep=6pt
    },
	xlabel={Number of scaled UEs $|\S|$},
	label style = {font=\fontsize{9pt}{10pt}\selectfont},
	legend cell align={left},
	legend pos = north west,
	legend style = {font=\fontsize{8pt}{10pt}\selectfont},
	axis background/.style={fill=white},
	minor x tick num=4,
	minor y tick num=4,
	major tick length=0.15cm,
	minor tick length=0.075cm,
	tick style={semithick,color=black},
	height=0.42\linewidth,
	width=0.9\linewidth,
]
\nextgroupplot[ylabel style={align=center}, ylabel={Metric \ref{item:metric-alpha}) \\ Improvement \\ of $\alpha$ (\%) }]
\addplot  [smooth, mark= diamond, color=red] 
		   coordinates{(10, 27.824) (20, 21.2774) (40, 16.483) (60, 14.3662) (80, 13.2022) (100, 13.0022)};
\nextgroupplot[ylabel style={align=center}, ylabel={Metric \ref{item:metric-CoMP}) \\ $|\J_{\text{(CoMP)}}|$ \\ ~}]
\addplot  [smooth, mark=square, color=brown] 
		   coordinates{(10, 16.28) (20, 13.52) (40, 12.16) (60, 11.36) (80, 11.12) (100, 11.11)};
\nextgroupplot[ylabel style={align=center}, ylabel={Metric \ref{item:metric-delivery}) \\ Inc of delivered\\ demand (\%)}]
\addplot  [smooth, mark=*, color=purple]
		   coordinates{(10, 2.7824) (20, 4.25548) (40, 6.5932) (60, 8.61972) (80, 10.5618) (100, 13.0022)};
\nextgroupplot[ylabel style={align=center}, ylabel={Metric \ref{item:metric-CoMP-SR}) \\ $|\J_{\text{(CoMP)}}\cap\S|$ \\ ~}]
\addplot  [smooth, mark=triangle*, color=blue] 
		   coordinates{(10, 1.29264) (20, 2.63535) (40, 4.3798) (60, 7.13426) (80, 8.50269) (100, 11.11)};
%\nextgroupplot[ylabel style={align=center}, ylabel={Metric \ref{item:metric-CoMP-SR}) \\ $\dfrac{|\S\cap\J_{\text{(CoMP)}}|}{|\J_{\text{(CoMP)}}|}$}]
%\addplot  [smooth, mark=triangle*, color=blue] 
%		   coordinates{(10, 0.0794006) (20, 0.194922) (40, 0.360181) (60, 0.628016) (80, 0.76463) (100, 1)};
\end{groupplot}
\end{tikzpicture}
\caption{This figure shows the four metrics in function of $|\S|$. The non-CoMP case is used as baseline for Metrics~\ref{item:metric-alpha}) and~\ref{item:metric-delivery}).}
\label{fig:comp}
\end{figure}

\pgfplotsset{compat=1.11,
    /pgfplots/ybar legend/.style={
    /pgfplots/legend image code/.code={%
       \draw[##1,/tikz/.cd,yshift=-0.25em]
        (0cm,0cm) rectangle (3pt,0.8em);},
   },
}
\begin{figure}[t]
\begin{tikzpicture}
\begin{groupplot}[
    group style={
        group name=my plots,
        group size=1 by 4,
        xlabels at=edge bottom,
        xticklabels at=edge bottom,
        vertical sep=6pt
    },
	label style = {font=\fontsize{9pt}{10pt}\selectfont},
	legend style = {font=\fontsize{6pt}{10pt}\selectfont},
    ybar,
    enlarge x limits=0.2,
    legend style={
        anchor=north,legend columns=-1
     },
    legend pos = north west,
    ylabel={Improvement of $\alpha$ (\%)},
    xlabel={Load limit $\bar{\rho}$},
    symbolic x coords={0.4, 0.6, 0.8, 1.0},
    xtick=data,
	minor x tick num=4,
	minor y tick num=4,
	major tick length=0.15cm,
	minor tick length=0.075cm,
	tick style={semithick,color=black},
	height=0.42\linewidth,
	width=0.9\linewidth,
    xtick align=inside,
    ymin = 0,
    ]
\nextgroupplot[ylabel style={align=center}, ylabel={Metric \ref{item:metric-alpha}) \\ Improvement \\ of $\alpha$ (\%) }, bar width = 0.15cm, 
    legend entries={$|\S|=20$,$|\S|=40$,$|\S|=60$,$|\S|=80$},
    legend columns = 4,
]
\addplot  coordinates {(0.4, 1.57247) (0.6, 5.34643)  (0.8, 13.4273) (1.0, 21.2774) };
\addplot  coordinates {(0.4, 0.889891) (0.6, 3.64669)  (0.8, 10.0093) (1.0, 16.483) };
\addplot  coordinates {(0.4, 0.332242) (0.6, 3.31965)  (0.8, 9.59463) (1.0, 14.3662) };
\addplot  coordinates {(0.4, 0.181444) (0.6, 2.8877)  (0.8, 8.16005) (1.0, 13.2022) };

\nextgroupplot[ylabel style={align=center}, ylabel={Metric \ref{item:metric-CoMP}) \\ $|\J_{\text{(CoMP)}}|$ \\ ~},bar width = 0.15cm]
\addplot  coordinates {(0.4, 2.46) (0.6, 6.18)  (0.8, 11.46) (1.0, 13.26) };
\addplot  coordinates {(0.4, 1.26) (0.6, 4.56)  (0.8, 9.42) (1.0, 12.06) };
\addplot  coordinates {(0.4, 0.84) (0.6, 4.48)  (0.8, 9.12) (1.0, 11.6) };
\addplot  coordinates {(0.4, 0.32) (0.6, 4.4)  (0.8, 8.48) (1.0, 11.6) };

\nextgroupplot[ylabel style={align=center}, ylabel={Metric \ref{item:metric-delivery}) \\ Inc of delivered \\ demand (\%)},bar width = 0.15cm]
\addplot  coordinates {(0.4, 0.314494) (0.6, 1.06929)  (0.8, 2.68546) (1.0, 4.25548) };
\addplot  coordinates {(0.4, 0.355956) (0.6, 1.45867)  (0.8, 4.00373) (1.0, 6.5932) };
\addplot  coordinates {(0.4, 0.199345) (0.6, 1.99179)  (0.8, 5.75678) (1.0, 8.61972) };
\addplot  coordinates {(0.4, 0.145155) (0.6, 2.31016)  (0.8, 6.52804) (1.0, 10.5618) };

\nextgroupplot[ylabel style={align=center}, ylabel={Metric \ref{item:metric-CoMP-SR}) \\ $|\J_{\text{(CoMP)}}\cap\S|$ \\ ~ }  ,bar width = 0.15cm]
\addplot  coordinates {(0.4, 0.46) (0.6, 1.24)  (0.8, 2.28) (1.0, 2.58) };
\addplot  coordinates {(0.4, 0.54) (0.6, 1.66)  (0.8, 3.68) (1.0, 5.4) };
\addplot  coordinates {(0.4, 0.32) (0.6, 2.56)  (0.8, 5.1) (1.0, 6.86) };
\addplot  coordinates {(0.4, 0.26) (0.6, 3.58)  (0.8, 6.62) (1.0, 9.26) };

%\nextgroupplot[ylabel style={align=center}, ylabel={Metric \ref{item:metric-CoMP-SR}) \\ $\dfrac{|\S\cap\J_{\text{(CoMP)}}|}{|\J_{\text{(CoMP)}}|}$},bar width = 0.15cm]
%\addplot  coordinates {(0.4, 0.186992) (0.6, 0.200647)  (0.8, 0.198953) (1.0, 0.19457) };
%\addplot  coordinates {(0.4, 0.330952) (0.6, 0.364035)  (0.8, 0.390658) (1.0, 0.413476) };
%\addplot  coordinates {(0.4, 0.508571) (0.6, 0.571429)  (0.8, 0.559211) (1.0, 0.591379) };
%\addplot  coordinates {(0.4, 0.8125) (0.6, 0.813636)  (0.8, 0.78066) (1.0, 0.798276) };
\end{groupplot}
\end{tikzpicture}
\caption{This figure shows the four metrics in function of $\bar{\rho}$. The non-CoMP case is baseline for Metrics~\ref{item:metric-alpha}) and~\ref{item:metric-delivery}). The legend applies to all subfigures.}
\label{fig:rhobar}
\end{figure}

By observing \figurename~\ref{fig:comp}, as expected for Metric~\ref{item:metric-alpha}), one can achieve more improvement of $\alpha$ by CoMP when the size of $\S$ is smaller. The reason is very understandable: With the same amount of resource, enhancing the performance for a small group of UEs is generally easier than for a larger group. CoMP achieves considerable improvement of $\alpha$, ranging from $13\%$ to $28\%$. 

The user-centric performance benefits from CoMP through both direct and indirect effects. For explanation, we use
\figurename~\ref{fig:10scaling} as an illustration. \figurename~\ref{fig:10scaling} shows some snapshots of our experiments, where the UEs and the RRHs are illustrated by dots and rectangles, respectively. The non-CoMP UEs in $\S$ are marked red. The CoMP UEs in $\S$ are marked green. The CoMP UEs in $\J\backslash\S$ are marked blue. The other UEs are in light gray.
Basically, there are two ways for enhancing the capacity performance of $\S$: Using CoMP for UEs in $\S$ or UEs in $\J\backslash\S$. That the former generates benefits is apparent, since the spectrum efficiency would be increased for $\S$, which is a direct effect of using CoMP. As for the latter, since CoMP raises the RRH resource efficiency for serving UEs, using CoMP for $\J\backslash\S$ costs less resource than the non-CoMP case. As a result, there would be more available resource for scaling up the demand of UEs in $\S$. Furthermore, the reduction of an RRH load results in lower interference to the other RRHs, leading to an indirect effect for performance enhancement. On the other hand, we remark that not every UE benefits from being served by CoMP. In \figurename~\ref{fig:10scaling}, those UEs in light gray do not fulfill Lemma~\ref{lma:adding} in CoMP selection. Experimentally, forcing them to use CoMP leads to virtually no capacity improvement or even worse performance, due to that those UEs may only have one RRH being in good channel condition to them. Such UEs would not benefit from CoMP.

The number of UEs participating in CoMP has a strong influence on Metric~\ref{item:metric-alpha}). This is analyzed based on three observations as follows. As the first observation, the trends of Metric~\ref{item:metric-alpha}) and Metric~\ref{item:metric-CoMP}) are very similar. Both metrics are influenced by the size of $\S$. For the second observation, by Metric~\ref{item:metric-CoMP}), when the size of $\S$ is smaller, more UEs tend to be involved in CoMP, which is the reason why Metric~\ref{item:metric-alpha}) gets better when $|\S|$ becomes smaller.
The third observation explains why more UEs would be involved in CoMP when $|\S|$ becomes smaller. Note that though $|\J_{\text{(CoMP)}}|$ increases with the decrease of $|\S|$, $|\J_{\text{(CoMP)}}\cap\S|$ however decreases with $|\S|$ (see Metric \ref{item:metric-CoMP}) and Metric \ref{item:metric-CoMP-SR})). It means that, with the decrease of $|\S|$, more UEs outside $\S$ and fewer UEs inside $\S$ would participate in CoMP. The former increases faster than the reduction of the latter. Recall that using CoMP for $\S$ and $\J\backslash\S$ leads to direct and indirect effects of benefits. We conclude that the two effects affect the performance of $\S$ to different extents with respect to the group size of $\S$. The reason is that, if we scale up demands for many UEs, due to the resulted intensive traffic, directly using CoMP to raise the spectrum efficiency for $\S$ is the most effective way for improving the performance of $\alpha$. One can see by Metric \ref{item:metric-CoMP-SR}) that the number of CoMP UEs in $\S$ increases with $|\S|$. Though the network still gains from the indirect effect of CoMP, the benefit is much more limited compared to the direct effect. On the other hand, when only a small proportion of UEs requires demand scaling (i.e. small $|\S|$), the gain from the direct effect is limited by the number of UEs in $\S$. In this case, it is more beneficial to focus on the other UEs (i.e. $\J\backslash\S$) of which the number is much higher than $|\S|$. As a consequence, the indirect effect of CoMP becomes dominating, i.e., reducing the RRH load with CoMP in order to alleviate the interference to $\S$ for increasing $\alpha$. Algorithm~\ref{alg:MaxD} indeed seeks for a resource configuration that maximumly leverages the two effects in order to achieve the highest improvement by CoMP.
This is coherent with the snapshots in \figurename~\ref{fig:10scaling}. Visually, the ratio between CoMP UEs inside $\S$ over those outside $\S$ varies apparently with respect to $|\S|$.

%By \figurename{s}~\ref{fig:10scaling-20} and~\ref{fig:10scaling-40}, when $|\S|$ is small, quite an amount of UEs outside $\S$ are involved in CoMP. On the other hand, when $|\S|$ is large, few UEs outside $\S$ are served by CoMP. 
%%The observation is also coherent with the results in~\figurename~\ref{fig:comp} with respect to Metric~\ref{item:metric-CoMP-SR}):  
%The reason is that, when $|\S|$ becomes large, the flexibility of the network in benefiting from CoMP in both effects reduces. In other words, with large $|\S|$, the gain of CoMP is mainly contributed by the direct effect. Later we show the observation is coherent with the results of \figurename~\ref{fig:rhobar}. 

%
%The reason is that, serving a UE that is not involved in demand scaling (i.e. outside $\S$) usually costs less resource than serving one inside $\S$. With limited resource, it is thus more flexible to serve the UEs outside $\S$ via CoMP without leading to overload of RRHs. Therefore, when $|\S|/|\J|$ is smaller, more UEs tend to be involved in CoMP. 

In \figurename~\ref{fig:comp}, the user-centric capacity performance (i.e. Metric \ref{item:metric-alpha})) is different from the network-wise capacity performance (i.e. Metric \ref{item:metric-delivery})).
%If one targets \textcolor[rgb]{0,0,1}{enhancing the capacity} for part of the UEs, 
Employing a user-centric strategy may help to deliver considerably more bits to those UEs to be scaled, compared to scaling up the demand for all UEs. If the group size is small, from the network's point of view, the capacity improvement is not as much as being achieved by scaling the demand for all UEs. 
%If one aims at using the spectrum resource most efficiently, then the fairness-based strategy should be employed. Nevertheless, if one targets performance enhancement for a specific user group, then the user-centric strategy is more preferred.
%Also, when targeting scaling up the demand for a specific group of UEs, a user-centric strategy is helpful: As shown by Metric \ref{item:metric-delivery}), by using CoMP in a fairness-based strategy, the network delivers up to $13\%$ more bits demand while one can deliver much more to $\S$ by a user-centric strategy. 
%Therefore, a fairness-based strategy for scaling up the demand of all UEs has significant deviation from employing a user-centric strategy. The smaller $|\S|$ is, the larger the deviation is. When $|\S|=10$, the user-centric strategy leads to $27\%$ improvement by CoMP, while the fairness-based strategy leads to only less than half of the former. 
%Recall that increasing 
Next, we observe that Metric~\ref{item:metric-delivery}) has a strong correlation with Metric~\ref{item:metric-CoMP-SR}). The more UEs in $\S$ are served by CoMP, the more the bits can be delivered network-wisely. 
As one can see that though $|\S\cap\J_{\text{(CoMP)}}|$ increases with respect to $\S$, $|\J_{\text{(CoMP)}}|-|\S\cap\J_{\text{(CoMP)}}|$ decreases (see this by combining Metric~\ref{item:metric-CoMP-SR}) with Metric~\ref{item:metric-CoMP})), meaning that the indirect effect of CoMP degenerates with $|\S|$ because the UEs outside $\S$ becomes fewer.

As a conclusion, user-centric demand scaling benefits from CoMP directly as well as indirectly. Namely, serving UEs inside $\S$ and outside $\S$ with CoMP both contribute.

\subsection{Performance with Respect to $\bar{\rho}$}
\label{subsec:sim-rho}

In \figurename~\ref{fig:rhobar}, we show Metrics \ref{item:metric-alpha})--\ref{item:metric-CoMP-SR}) as function of $\bar{\rho}$. As expected, the larger $\bar{\rho}$ is, the higher improvement can be achieved via CoMP. By both Metric~\ref{item:metric-alpha}) and Metric~\ref{item:metric-delivery}), when there is better availability of resource, the impact of the size of $S$ on the achievable performance is higher. 
By Metric~\ref{item:metric-delivery}), one can see that the increase of total delivered demand is almost linear in $|\S|$ (when $\bar{\rho}$ is $0.6$ or higher). 
\textcolor[rgb]{0,0,0}{Due to the limitation of resource availability when $\bar{\rho}=0.4$, there may not be enough resource in RRHs for CoMP cooperations, even if we target scaling up demand for as many UEs as there are. As a consequence, the total delivered demand may not increase even if CoMP is enabled. One can see from Metric~\ref{item:metric-CoMP}) that the number of UEs in CoMP is lower when $\bar{\rho}=0.4$ than the cases with $\bar{\rho}\geq 0.6$. Specifically, the number of UEs in $\S$ that participate in CoMP is significantly lower when $\bar{\rho}=0.4$, as shown by Metric~\ref{item:metric-CoMP-SR}). As another observation, one can see that Metric~\ref{item:metric-delivery}) and Metric~\ref{item:metric-CoMP-SR}) have the same trend under all settings of $\bar{\rho}$. 
It indicates that the direct effect of CoMP for UEs in $\S$ has a high
correlation to the total delivered bits demand.}

By combining Metric~\ref{item:metric-CoMP}) with Metric~\ref{item:metric-CoMP-SR}), one can see that when $|\S|$ is small (e.g. $|\S|=20$), though $|\J_{\text{(CoMP)}}|$ increases quickly with $\bar{\rho}$, $|\J_{\text{(CoMP)}}\cap\S|$ does slowly. It means that the number of CoMP UEs outside $\S$ increases quickly with the increase of the available resource. Hence, when there is more resource, it is more flexible for RRHs to cooperate in CoMP to serve $\J\backslash\S$ for RRH load reduction.  As a consequence, the optimization leads to more UEs outside $\S$ to be involved in CoMP.

In conclusion, 
the availability of resource has an influence on the number of CoMP UEs.  When $|\S|$ becomes smaller, the UEs in $\S$ benefit more from increasing $\bar{\rho}$ and more UEs outside $\S$ should be served by CoMP.

\subsection{\textcolor[rgb]{0,0,0}{Performance with Respect to Resource Consumption}}

\textcolor[rgb]{0,0,0}{We show the relationship between user demand scaling/satisfaction and time-frequency resource allocation in \figurename~\ref{fig:load}. Namely, we evaluate how much RRH load (i.e. amount of resource) is required to achieve more demand delivery by optimizing CoMP selection. Our test scenarios have 19 cells. The radius of each cell is $100$ m. Each cell is deployed with $30$ UEs and $3$ RRHs. The other settings follow \tablename~\ref{tab:sim}.}

\textcolor[rgb]{0,0,0}{
\figurename~\ref{fig:load} compares the load levels of non-CoMP and CoMP, in order to show how much load is needed for achieving a $10\%$--$15\%$ increase of demand delivery by Algorithm~\ref{fig:load}. By observation, there is very slight difference in the total RRH load levels between non-CoMP and CoMP, though the latter delivers significantly more demand than the former. In addition, we found that optimizing the CoMP selection indeed reduces the average RRH load consumption for the highly-loaded RRHs. Therefore, we conclude that the time-frequency resource efficiency can be considerably improved by optimizing CoMP selection.
}

\subsection{\textcolor[rgb]{0,0,0}{CoMP Selection Comparison}}

\textcolor[rgb]{0,0,0}{
In this subsection, we compare our proposed CoMP selection mechanism with another one that is proposed for user-centric scenarios \cite{User-centric-comp}. We tailor the CoMP selection method in \cite{User-centric-comp} to be applicable in our model. We use the formula below as the utility of each UE for CoMP selection, which is designed for elastic application scenarios as considered by the authors.}
\begin{equation}
\textcolor[rgb]{0,0,0}{u_j(C_j) = \frac{\ln(C_j+1)}{\ln(d_j+1)}}
\label{eq:comp-utility}
\end{equation}
\textcolor[rgb]{0,0,0}{In \eqref{eq:comp-utility}, recall that $C_j$ and $d_j$ represent the achievable bits rate and the bits demand of UE $j$, respectively. Therefore we have $0\leq u_j \leq 1$. Suppose $\bm{\kappa}$ is the current RRH-UE association, and $\bm{\kappa}'$ the a new association that differs with $\bm{\kappa}$ only in one RRH-UE pair $(i,j)$ by $\kappa_{ij}=0$ and $\kappa'_{ij}=1$. By keeping the cell loads $\bm{\rho}$ fixed, we define the utility increase by adding the CoMP link $(i,j)$ as $\Delta u_{ij}=u_j(C_j(\bm{\kappa}'))-u_j(C_j(\bm{\kappa}))$. The utility-based CoMP selection rule is: $\kappa_{ij}=1$ if and only if $\Delta u_{ij}>0$ ($i\in\I$, $j\in\J$). Once a new CoMP link is added, Theorem~5 is used to obtain the optimum of the remaining resource allocation problem. Therefore, the resource allocation is performed in the same manner for both our CoMP selection method and the utility-based one. The difference is that our method considers the interference influence caused by the dynamic change of RRH loads.
}

\textcolor[rgb]{0,0,0}{The comparison is shown in \figurename~\ref{fig:comp_compare}. We can see that our proposed method outperforms the utility-based CoMP selection mechanism. In general, the improvement of $\alpha$ by our proposed CoMP seleciton is $1.5$ times more than the utility-based CoMP selection. We remark that, compared to the utility-based CoMP selection, our proposed method takes into account the dynamic change of the cell load when a new CoMP link is added (see Section \ref{subsec:association}). Hence our proposed method outperforms the utility-based one, if cell load coupling is taken into account.
}

\pgfplotsset{compat=1.11,
    /pgfplots/ybar legend/.style={
    /pgfplots/legend image code/.code={%
       \draw[##1,/tikz/.cd,yshift=-0.25em]
        (0cm,0cm) rectangle (3pt,0.8em);},
   },
}
\begin{figure}[t]
\centering
\begin{tikzpicture}
\begin{groupplot}[
    group style={
        group name=my plots,
        group size=1 by 2,
        xlabels at=edge bottom,
        xticklabels at=edge bottom,
        vertical sep=6pt
    },
	label style = {font=\fontsize{9pt}{10pt}\selectfont},
	legend style = {font=\fontsize{6pt}{10pt}\selectfont},
    ybar,
    enlarge x limits=0.2,
    legend style={
        anchor=north,legend columns=-1
     },
    legend pos = north west,
    ylabel={Improvement of $\alpha$ (\%)},
    xlabel={Percentage of scaled UEs (\%)},
    symbolic x coords={20, 40, 60, 80},
    xtick=data,
	minor x tick num=4,
	minor y tick num=4,
	major tick length=0.15cm,
	minor tick length=0.075cm,
	tick style={semithick,color=black},
	height=0.42\linewidth,
	width=0.9\linewidth,
    xtick align=inside,
    ymin = 0,
    ]
\nextgroupplot[ylabel style={align=center}, ylabel={Total load}, bar width = 0.25cm, 
	ymax = 30,
    legend entries={Non-CoMP, CoMP},
    legend columns = 4,
]
\addplot  coordinates {(20, 16.9522) (40, 20.1307)  (60, 20.1779) (80, 20.4903) };
\addplot  coordinates {(20, 16.9622) (40, 20.1507)  (60, 20.1979) (80, 20.5103) };

\nextgroupplot[ylabel style={align=center}, ylabel={Average load \\ of 10 highest \\ loaded RRHs }, bar width = 0.25cm]
\addplot  coordinates {(20, 0.790219) (40, 0.873644)  (60, 0.888588) (80, 0.90278) };
\addplot  coordinates {(20, 0.77107) (40, 0.853269)  (60, 0.865983) (80, 0.880204) };

\end{groupplot}
\end{tikzpicture}
\caption{\textcolor[rgb]{0,0,0}{This figures shows the time-frequency resource consumption (i.e. RRH load) for delivering user demands. Though not shown by the figure, we remark that, with non-CoMP being the baseline, CoMP leads to $10\%$--$15\%$ percentage of improvement on $\alpha$ for all the data in this figure.}}
\label{fig:load}
\end{figure}

\pgfplotsset{compat=1.11,
        /pgfplots/ybar legend/.style={
        /pgfplots/legend image code/.code={%
        %\draw[##1,/tikz/.cd,yshift=-0.25em]
                %(0cm,0cm) rectangle (3pt,0.8em);},
        \draw[##1,/tikz/.cd,bar width=3pt,yshift=-0.2em,bar shift=0pt]
                plot coordinates {(0cm,0.8em)};},
},
}
\begin{figure}[ht!] 
\centering 
\begin{tikzpicture}
\begin{axis}[
%	xmode=log,
	log ticks with fixed point,
	xlabel={Number of scaled UEs $|\S|$},
	ylabel={Improvement of $\alpha$ (\%) },
	label style = {font=\fontsize{9pt}{10pt}\selectfont},
    legend style={ font=\fontsize{7.8pt}{10pt}\selectfont,
      anchor=north,legend columns=2},
      legend pos = north east,
minor x tick num=4,
minor y tick num=4,
major tick length=0.15cm,
minor tick length=0.075cm,
tick style={semithick,color=black},
	height=0.667\linewidth,
	width=\linewidth,
        scaled y ticks=false,
]

\addplot  [smooth, mark= diamond, color=red] 
		   coordinates{(10, 27.824) (20, 21.2774) (40, 16.483) (60, 14.3662) (80, 13.2022) (100, 13.0022)};
		   
\addplot  [smooth, mark= diamond, color=blue] 
		   coordinates{(10, 11.1345) (20, 8.2325) (40, 6.5923) (60, 5.7486) (80, 5.2321) (100, 5.2020)};
		   
\legend{Proposed CoMP, Utility-based CoMP}
\end{axis}
\end{tikzpicture} 
\caption{Comparison of CoMP selection regarding the improvement of demand scaling.}
\label{fig:comp_compare}
\end{figure}

\subsection{Convergence}
\label{subsec:convergence}

\pgfplotsset{compat=1.11,
        /pgfplots/ybar legend/.style={
        /pgfplots/legend image code/.code={%
        %\draw[##1,/tikz/.cd,yshift=-0.25em]
                %(0cm,0cm) rectangle (3pt,0.8em);},
        \draw[##1,/tikz/.cd,bar width=3pt,yshift=-0.2em,bar shift=0pt]
                plot coordinates {(0cm,0.8em)};},
},
}
\begin{figure}
\centering 
\begin{tikzpicture}
\begin{axis}[
	ymode = log,
	xlabel={Iteration $k$ in Theorem~\ref{thm:compute_G}},
	ylabel={$\norm{[\alpha^{(k)},\bm{\mu}^{(k)}]-[\alpha^{(k-1)},\bm{\mu}^{(k-1)}]}$},
	label style = {font=\fontsize{9pt}{10pt}\selectfont},
	legend cell align={left},
	legend pos = south west,
	legend style = {font=\fontsize{7pt}{10pt}\selectfont},
	axis background/.style={fill=white},
minor x tick num=0,
minor y tick num=4,
major tick length=0.15cm,
minor tick length=0.075cm,
xtick = {0,5,...,50},
tick style={semithick,color=black},
	height=0.667\linewidth,
	width=0.95\linewidth,
	xmin=5,
	xmax=30,
]
	
\addplot [smooth,dashed,color=black] coordinates {
(1, 67.3342) (2, 17.6731) (3, 1.13818) (4, 9.01713) (5, 3.65742) 
(6, 2.30692) (7, 1.20581) (8, 0.680668) (9, 0.372068) (10, 
0.206043) (11, 0.113538) (12, 0.0626687) (13, 0.0345765) (14, 
0.0190767) (15, 0.0105264) (16, 0.00580768) (17, 0.00320455) (18, 
0.00176808) (19, 0.000975566) (20, 0.000538269) (21, 0.000296995) 
(22, 0.000163868) (23, 0.0000904154) (24, 0.0000498872) (25, 
0.0000275256) (26, 0.0000151874) (27, 8.37973e-6) (28, 
 4.62356e-6) (29, 2.55108e-6) (30, 1.40757e-6) (31, 
 7.76637e-7) (32, 4.28514e-7) (33, 2.36435e-7) (34, 
 1.30454e-7) (35, 7.1979e-8) (36, 3.97148e-8) (37, 
 2.19129e-8) (38, 1.20906e-8) (39, 6.67104e-9) (40, 
 3.68079e-9) (41, 2.03089e-9) (42, 1.12055e-9) (43, 
 6.18265e-10) (44, 3.41139e-10) (45, 1.88223e-10) (46, 
 1.03832e-10) (47, 5.72697e-11) (48, 3.15978e-11) (49, 
 1.74367e-11) (50, 9.62075e-12)
 };

\addplot [smooth, color=brown] coordinates {
(1, 67.3342) (2, 56.0002) (3, 1.17635) (4, 0.40385) (5, 
0.05385) (6, 0.0174358) (7, 0.00592489) (8, 0.00174543) (9, 
0.000665538) (10, 0.000214173) (11, 0.0000740503) (12, 
0.0000251338) (13, 8.51233e-6) (14, 2.90076e-6) (15, 
 9.84223e-7) (16, 3.34665e-7) (17, 1.1372e-7) (18, 
 3.86461e-8) (19, 1.31347e-8) (20, 4.46376e-9) (21, 
 1.51706e-9) (22, 5.15589e-10) (23, 1.7523e-10) (24, 
 5.95506e-11) (25, 2.02345e-11) (26, 6.87805e-12) (27, 
 2.34301e-12) (28, 7.97584e-13) (29, 2.70006e-13) (30, 
 9.23706e-14)
};
\addplot [smooth, color=blue, dash dot] coordinates {
(1, 67.3342) (2, 60.5657) (3, 0.972442) (4, 0.10915) (5, 
0.0407204) (6, 0.00817776) (7, 0.003162) (8, 0.000842966) (9, 
0.000286363) (10, 0.0000839589) (11, 0.0000270435) (12, 
 8.22147e-6) (13, 2.59309e-6) (14, 7.99402e-7) (15, 
 2.50043e-7) (16, 7.75039e-8) (17, 2.41619e-8) (18, 
 7.50528e-9) (19, 2.33667e-9) (20, 7.2644e-10) (21, 
 2.26047e-10) (22, 7.02984e-11) (23, 2.18696e-11) (24, 
 6.80167e-12) (25, 2.11564e-12) (26, 6.57252e-13) (27, 
 2.04281e-13) (28, 6.30607e-14) (30, 6.30607e-15)
 };

\addplot [smooth, color=red, dotted,thick] coordinates {
(1, 67.3342) (2, 62.5134) (3, 0.561339) (4, 0.0800516) (5, 
0.0279898) (6, 0.00725912) (7, 0.00241841) (8, 0.000719891) (9, 
0.000231059) (10, 0.0000714372) (11, 0.0000226418) (12, 
 7.08476e-6) (13, 2.23604e-6) (14, 7.02543e-7) (15, 
 2.21413e-7) (16, 6.96684e-8) (17, 2.19459e-8) (18, 
 6.90915e-9) (19, 2.17607e-9) (20, 6.85229e-10) (21, 
 2.15806e-10) (22, 6.79607e-11) (23, 2.14024e-11) (24, 
 6.7395e-12) (25, 2.12275e-12) (26, 6.68798e-13) (27, 
 2.12275e-13) (28, 6.92779e-14) (29, 2.30926e-14) (30, 
 8.88178e-15) 
};
\legend{$|\S|=10$, $|\S|=40$, $|\S|=70$, $|\S|=100$}
\end{axis}
\end{tikzpicture}
\caption{This figures shows the norm $\norm{\cdot}$ in function of iteration $k$ in Theorem~\ref{thm:compute_G}, under $|\S|=10,40,70,100$. }
\label{fig:convergence}
\end{figure}

We show by \textcolor[rgb]{0,0,0}{Figure~\ref{fig:convergence}} the convergence of the proposed method in~\eqref{eq:G_iter} and~\eqref{eq:G_iter2}, with $|\S|=10,40,70,100$. Initially we have $\alpha^{(0)}=1$ and $\bm{\mu}^{(0)}=\bm{0}$. Numerically, one can see that both $\alpha$ and $\bm{\mu}$ converge fast. \textcolor[rgb]{0,0,0}{By Figure~\ref{fig:convergence}, under the same convergence tolerance parameter, the larger $|\S|$, the faster the convergence. Moreover, larger $|\S|$ requires fewer algorithmic iterations, as it becomes fast for RRHs to reach the resource limit (i.e. the condition $H(\bm{\rho})=1$). As for our convergence tolerance setting $\epsilon=10^{-4}$, the method converges within 25 iterations for all cases.} We remark that when $|\S|=100$, the iterations in Theorem~\ref{thm:compute_G} are actually computing the eigenvalue and eigenvector of a concave function and converge faster than the other cases.  In general, Theorem~\ref{thm:compute_G} serves as an efficient sub-routine for Algorithm~\ref{alg:MaxD}.

\subsection{Additional Results with Presence of Fronthaul Capacity}

In some scenarios, the capacity of the fronthaul links may turn out to
be the performance bottleneck.  Denote by the fronthaul link capacity
limit of RRH $i$ by $c_i$. Considering this capacity leads to the
additional constraint of $\sum_{j \in \J_i \cap \S} \alpha d_j \leq
c_i, i \in \R$.  

Algorithm~\ref{alg:MaxD} can be easily extended to incorporate
fronthaul capacity, by imposing a condition for
Step~\ref{step:add}. Namely, adding a link is considered, only if the
capacity limit of the fronthaul of the RRH in question is not reached
yet for the current demand scaling factor.

\begin{figure}[ht!]
\centering
\begin{tikzpicture}
\begin{axis}[
%	xmode=log,
	log ticks with fixed point,
	xlabel={Normalized fronthaul capacity},
	ylabel={Demand scaling $\alpha$},
	label style = {font=\fontsize{9pt}{10pt}\selectfont},
    legend style={ font=\fontsize{6.8pt}{10pt}\selectfont,
      anchor=north,legend columns=1},
      legend pos = north west,
minor x tick num=4,
minor y tick num=4,
major tick length=0.15cm,
minor tick length=0.075cm,
tick style={semithick,color=black},
	height=0.667\linewidth,
	width=\linewidth,
        scaled y ticks=false,
]

\addplot  [smooth, mark= diamond, color=red] 
		   coordinates{(0.1, 12.5481) (0.2, 19.1293) (0.3, 24.3858) (0.4, 27.5176) (0.5, 28.8256) (0.8, 29.3156) (1.0, 29.4515)};
\addplot  [smooth, mark= diamond, color=blue] 
		   coordinates{(0.1, 7.08245) (0.2, 10.713) (0.3, 12.8424) (0.4, 14.43) (0.5, 15.3163) (0.8, 15.8123) (1.0, 16.0201)};
\addplot  [smooth, mark= diamond, color=green] 
		   coordinates{(0.1, 5.26059) (0.2, 7.3773) (0.3, 8.86115) (0.4, 9.87316) (0.5, 10.4942) (0.8, 10.67584) (1.0, 10.77584)};
\addplot  [smooth, mark= diamond, color=orange] 
		   coordinates{(0.1, 4.18025) (0.2, 5.80734) (0.3, 6.86266) (0.4, 7.61746) (0.5, 8.08098) (0.8, 8.13268) (1.0, 8.15768)};
		   
\legend{$|\S|=20$, $|\S|=40$, $|\S|=60$, $|\S|=80$}
\end{axis}
\end{tikzpicture} 
\caption{The impact of fronthaul capacity on demand scaling.}
\label{fig:fronthaul}
\end{figure}

In Figure~\ref{fig:fronthaul}, we show the impact of fronthaul
capacity on demand scaling. The maximum fronthaul capacity is set to a sufficiently large value such that it wouldn't be a bottleneck of the system performance. The capacity shown in the x-axis is normalized with respect to this value. As can be seen, demand
scaling is fronthaul-limited when the capacity is low, as increasing
this capacity consistently improves the scaling factor. However, the
curve becomes eventually saturated, meaning that the bottleneck is now
due to radio access instead of the fronthaul. The results also show
out algorithm is useful for studying which part of the network
imposes the performance-limiting factor.

\section{Conclusion}
\label{sec:conclusion}

We have proved how CoMP selection and demand scaling can be jointly
optimized and demonstrated how CoMP improves the performance of user
capacity in user-centric C-RAN.  We have revealed that the users
involved in demand scaling benefit both directly and indirectly from
CoMP, by increasing the spectrum efficiency, and alleviating the
interference among RRHs, respectively. Furthermore, the two effects
contribute to the performance to different degrees with respect to the
number of users for demand scaling. Finally, the user-centric demand
scaling method proposed in this paper is not limited by the C-RAN
architecture, and can be applied to other interference models that
fall into the SIF framework.

One extension of the work is to consider beamforming by deploying
multiple antennas at each RRH. This would bring additional performance
gains, in addition to what the current work has focused on.  As long
as the strategy of setting the beamforming vector is given, our
analysis remains applicable as the received signal and interference
terms remain linear. On the other hand, joint optimization of
association and beamforming, with presence of load coupling between
the cells, leads to a new type of optimization problem for our
forthcoming work.

\appendix

\section*{\small Proof of Theorem~\ref{thm:compute_G}}

\begin{lemma}
For any given $\alpha>0$, denote $\bm{\mu}_{\alpha}=\lim_{k\rightarrow\infty}\bm{\mu}_{\alpha}^{(k)}$ where 
\begin{equation}
\bm{\mu}_{\alpha}^{(k+1)}=\frac{\vec{F}_{\alpha}(\bm{\mu}^{(k)})}{H\circ\vec{F}_{\alpha}(\bm{\mu}^{(k)})}.
\label{eq:rho_alpha}
\end{equation}
Then $H(\bm{\mu}_{\alpha})=1$ and $\vec{F}_{\alpha}(\bm{\mu}_{\alpha})=\lambda_{\alpha}\bm{\mu}_{\alpha}$ for unique $\lambda_{\alpha} > 0$.
\label{lma:rho_alpha}
\end{lemma}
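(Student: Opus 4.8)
The plan is to split the claim into its two parts---the normalization identity $H(\bm{\mu}_{\alpha})=1$ and the eigenpair property---and to exploit both the homogeneity of $H$ and the fact, already noted just after the definition of $\vec{F}_{\alpha}$, that $\vec{F}_{\alpha}$ is a standard interference function for every fixed $\alpha>0$.

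First I would record the structural properties of $H$. From its definition, $H(\bm{\mu})=\frac{1}{\bar{\rho}}\max_{i\in\R}\sum_{j\in\J_i}\mu_j$ is continuous, monotone, and positively homogeneous of degree one, i.e.\ $H(c\bm{\mu})=cH(\bm{\mu})$ for every scalar $c\ge 0$. The homogeneity is decisive: applying $H$ to the recursion~\eqref{eq:rho_alpha} and pulling the positive scalar $1/\bigl(H\circ\vec{F}_{\alpha}(\bm{\mu}_{\alpha}^{(k)})\bigr)$ outside shows that $H(\bm{\mu}_{\alpha}^{(k+1)})=1$ for every $k$. Thus every iterate past the first lives on the level set $\{\bm{\mu}:H(\bm{\mu})=1\}$, and once convergence of the sequence is in hand, continuity of $H$ immediately gives $H(\bm{\mu}_{\alpha})=1$.

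Next I would settle convergence of the normalized iteration and the existence of the eigenpair. Since $\vec{F}_{\alpha}$ is monotone and strictly scalable (both SIF properties are inherited from $\vec{f}$, as dividing the $j$-th coordinate by the positive constant $\pi_j(\alpha)$ preserves them), the map falls under the concave Perron--Frobenius framework of~\cite{Krause:2001wd} already used in Section~\ref{subsec:sys_mod-scaling}. That theory furnishes a positive eigenvector, unique up to positive scaling, together with a unique positive eigenvalue $\lambda_{\alpha}$ solving $\vec{F}_{\alpha}(\bm{\mu})=\lambda_{\alpha}\bm{\mu}$, and guarantees that the normalized power iteration---which is exactly~\eqref{eq:rho_alpha}---converges to the representative of that eigendirection singled out by $H(\cdot)=1$. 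Naming this limit $\bm{\mu}_{\alpha}$ simultaneously delivers convergence and the relation $\vec{F}_{\alpha}(\bm{\mu}_{\alpha})=\lambda_{\alpha}\bm{\mu}_{\alpha}$.

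Finally I would identify the eigenvalue: applying $H$ to $\vec{F}_{\alpha}(\bm{\mu}_{\alpha})=\lambda_{\alpha}\bm{\mu}_{\alpha}$ and using homogeneity together with $H(\bm{\mu}_{\alpha})=1$ yields $\lambda_{\alpha}=H\circ\vec{F}_{\alpha}(\bm{\mu}_{\alpha})$, which is strictly positive because $\vec{F}_{\alpha}$ is $\mathbb{R}_{++}^{n}$-valued. I expect the genuine obstacle to be this convergence-and-uniqueness step---cleanly arguing that monotonicity and strict scalability force both the directions $\bm{\mu}_{\alpha}^{(k)}$ to converge and the eigenvalue to be unique. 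Should a self-contained argument be preferred over citing~\cite{Krause:2001wd}, I would instead track the Collatz--Wielandt ratios $\min_j F_{\alpha,j}(\bm{\mu}_{\alpha}^{(k)})/\mu_{\alpha,j}^{(k)}$ and $\max_j F_{\alpha,j}(\bm{\mu}_{\alpha}^{(k)})/\mu_{\alpha,j}^{(k)}$: monotonicity makes the former non-decreasing and the latter non-increasing along the iteration, while strict scalability forbids a positive gap between their limits, so they squeeze to the common value $\lambda_{\alpha}$, pinning down the eigenvalue and ruling out a second one.
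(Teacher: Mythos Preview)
Your proposal is correct and follows the same route as the paper: the paper's entire proof is the single sentence ``The lemma follows from Theorem~1 in~\cite{Krause:2001wd},'' so your argument simply unpacks that citation, adding the homogeneity observation for $H$ and the optional Collatz--Wielandt sketch as elaboration. Nothing is missing or mistaken; you have just written out what the paper leaves implicit.
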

\begin{proof}
The lemma follows from Theorem 1 in~\cite{Krause:2001wd}.
\end{proof}
\begin{lemma}
For any given $\vec{\mu}\geq\vec{0}$, define $P:\mathbb{R}_{++}\rightarrow\mathbb{R}_{++}$: 
\begin{equation}
P:\alpha\mapsto\frac{1}{H\circ \vec{F}_{\alpha}(\bm{\mu})}.
\end{equation}
Then $P(\alpha)$ is an SIF.
\label{lma:P}
\end{lemma}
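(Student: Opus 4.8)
The plan is to make the dependence of $P$ on $\alpha$ fully explicit by unfolding the definitions of $H$ and $\vec{F}_{\alpha}$, and then to check directly the defining properties of an SIF (positivity, monotonicity, scalability). Fix $\bm{\mu}\geq\vec{0}$ and, for each RRH $i\in\R$ with $\J_i\neq\emptyset$, set $a_i=\sum_{j\in\J_i\cap\S}f_j(\bm{\mu})$ and $b_i=\sum_{j\in\J_i\setminus\S}f_j(\bm{\mu})$, so that $a_i,b_i\geq 0$ and $a_i+b_i>0$. Since $\pi_j(\alpha)=1$ on $\S$ and $\pi_j(\alpha)=\alpha$ otherwise, the load of RRH $i$ under $\vec{F}_{\alpha}(\bm{\mu})$ is exactly $a_i+b_i/\alpha$, whence
\begin{equation}
P(\alpha)=\frac{\bar{\rho}}{\max_{i\in\R}\bigl(a_i+b_i/\alpha\bigr)}=\min_{i\in\R}\frac{\bar{\rho}\,\alpha}{a_i\alpha+b_i}=\min_{i\in\R}P_i(\alpha),
\end{equation}
with $P_i(\alpha)=\bar{\rho}\alpha/(a_i\alpha+b_i)$. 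This representation as a pointwise minimum of simple rational functions is the backbone of the argument.

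Positivity is immediate since $a_i\alpha+b_i>0$ for $\alpha>0$. For monotonicity I would note that each $a_i+b_i/\alpha$ is non-increasing in $\alpha$, so the maximizing load is non-increasing and its reciprocal $P$ is non-decreasing; equivalently each $P_i$ is non-decreasing and the pointwise minimum preserves monotonicity. For scalability the cleanest route is to rewrite $\beta P(\alpha)>P(\beta\alpha)$ ($\beta>1$) in terms of the denominator $Q(\alpha)=\max_{i}(a_i+b_i/\alpha)$: it is equivalent to $\beta\,Q(\beta\alpha)>Q(\alpha)$, and since $\beta Q(\beta\alpha)=\max_i(\beta a_i+b_i/\alpha)$ while $Q(\alpha)=\max_i(a_i+b_i/\alpha)$, the claim reduces to
\begin{equation}
\max_{i\in\R}\bigl(\beta a_i+b_i/\alpha\bigr)>\max_{i\in\R}\bigl(a_i+b_i/\alpha\bigr),\qquad \beta>1 .
\end{equation}
Because $\beta a_i\geq a_i$ termwise, the left side always weakly dominates the right, so the entire difficulty is concentrated in establishing strictness.

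The strict inequality is the main obstacle, and it is precisely where the standing hypotheses enter. If some maximizer $i^{*}$ of the right-hand side has $a_{i^{*}}>0$, then substituting $i^{*}$ into the left side already yields $\beta a_{i^{*}}+b_{i^{*}}/\alpha>a_{i^{*}}+b_{i^{*}}/\alpha=Q(\alpha)$, and we are done; equivalently, every $P_i$ with $a_i>0$ is strictly scalable and the minimum of SIFs is again an SIF. The delicate case is when every active (bottleneck) RRH carries no $\S$-traffic, i.e. $a_{i^{*}}=0$, for then the bottleneck load $b_{i^{*}}/\alpha$ is unaffected by the boost $a_i\mapsto\beta a_i$ and strict scalability can collapse to equality. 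To close this I would invoke the standing assumptions that $\S\neq\emptyset$ and that every UE is served by at least one RRH, which force at least one index with $a_i>0$, and then argue that the load-determining RRH in the regime relevant to~\eqref{eq:max3} necessarily carries $\S$-traffic, so that a positive-$a$ RRH is (or may be taken to be) the active maximizer. Ruling out a pure-interference RRH as the sole bottleneck is the step I expect to demand the most care.
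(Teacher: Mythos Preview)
Your unfolding into $P(\alpha)=\min_{i} P_i(\alpha)$ with $P_i(\alpha)=\bar{\rho}\alpha/(a_i\alpha+b_i)$ is correct, and positivity and monotonicity follow as you say. The paper takes a different route: it observes that each per-RRH load $a_i+b_i/\alpha$ is convex in $\alpha$, hence so is the maximum over $i$, and then asserts that $P$ is concave, invoking the standard fact that a positive monotone concave function is an SIF. So the paper argues via a single concavity step rather than your explicit min-of-rational-functions calculation.

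The gap you flag in the scalability step is genuine, and the fix you sketch does not close it. Having $\S\neq\emptyset$ only guarantees that \emph{some} $a_i>0$, not that the maximizer of $a_i+b_i/\alpha$ has $a_i>0$; a two-RRH instance with $(a_1,b_1)=(0,100)$ and $(a_2,b_2)=(1,0)$ at $\alpha=1$ gives $Q(\alpha)=100$ and $\max_i(\beta a_i+b_i/\alpha)=\max(100,\beta)=100$ for all $1<\beta\leq 100$, so the strict inequality $\beta P(\alpha)>P(\beta\alpha)$ fails there. Appealing to ``the regime relevant to~\eqref{eq:max3}'' is not a proof of the lemma as stated, which asserts the SIF property for \emph{every} fixed $\bm{\mu}\geq\vec{0}$; it would at best yield a weaker, context-dependent claim. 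It is worth noting that the paper's own proof does not confront this point either: it asserts that concavity implies scalability and remarks on strict concavity only under the side condition that some $j$ has $\pi_j(\alpha)=\alpha$, without explaining why that suffices for the strict inequality when the bottleneck RRH carries no $\S$-traffic.
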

\begin{proof}
Given $\vec{\mu}$, denote $\varphi_j=f_j(\vec{\mu})$.
Then
\[
P(\alpha)=\frac{1}{\frac{1}{\bar{\rho}}\max_{i\in\R}\sum_{j\in\J_i}\frac{\varphi_j}{\pi_j(\alpha)}}
\]
For any $i\in\R$, the term $\sum_{j\in\J_i}\dfrac{\varphi_j}{\pi_j(\alpha)}$ is convex in $\alpha$.
Thus, the term $\max_{i\in\R}\sum_{j\in\J_i}\dfrac{\varphi_j}{\pi_j(\alpha)}$ is convex in $\alpha$, and thus the function $P(\alpha)$ is concave in $\alpha$ (strictly concave if there is at least one $j$ such that $\pi(\alpha)=\alpha$). The concavity implies the scalability. In addition, $P(\alpha)$ is monotonic in $\alpha$. Hence the conclusion.
\end{proof}

\begin{lemma}
For any $\bm{\mu}\in\mathbb{R}_{+}^{n}$ and any $\lambda\in\mathbb{R}_{++}$, $\lambda\geq\lambda_{\alpha}$ if $\lambda\bm{\mu}\geq\vec{F}_{\alpha}(\bm{\mu})$.
\label{lma:lambda_alpha}
\end{lemma}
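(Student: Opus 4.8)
The plan is to prove Lemma~\ref{lma:lambda_alpha} by a Collatz--Wielandt type argument, using only the two defining properties of the standard interference function $\vec{F}_{\alpha}$ (monotonicity and sub-homogeneity) together with the eigen-relation supplied by Lemma~\ref{lma:rho_alpha}. First I would record a preliminary positivity fact: since every coordinate of $\vec{F}_{\alpha}$ is strictly positive on $\mathbb{R}_{+}^{n}$, the hypothesis $\lambda\bm{\mu}\geq\vec{F}_{\alpha}(\bm{\mu})$ forces $\lambda\mu_j\geq[\vec{F}_{\alpha}(\bm{\mu})]_j>0$ for every $j$, so that $\lambda>0$ and $\bm{\mu}>\bm{0}$ componentwise. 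This is precisely what lets me form the coordinatewise ratios against the positive eigenvector $\bm{\mu}_{\alpha}$ from Lemma~\ref{lma:rho_alpha}, which satisfies $\vec{F}_{\alpha}(\bm{\mu}_{\alpha})=\lambda_{\alpha}\bm{\mu}_{\alpha}$ and $H(\bm{\mu}_{\alpha})=1$.

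Next comes the touching construction. I would set $c=\max_{i}(\mu_{\alpha})_{i}/\mu_{i}$, the smallest scalar with $c\bm{\mu}\geq\bm{\mu}_{\alpha}$, and let $k$ be a coordinate achieving equality, i.e. $c\mu_{k}=(\mu_{\alpha})_{k}$. The key algebraic step is that up-scaling preserves the sub-eigenvector inequality: because $\vec{F}_{\alpha}$ is sub-homogeneous, $\vec{F}_{\alpha}(c\bm{\mu})\leq c\,\vec{F}_{\alpha}(\bm{\mu})\leq c\lambda\bm{\mu}=\lambda(c\bm{\mu})$ whenever $c\geq 1$. Monotonicity applied to $c\bm{\mu}\geq\bm{\mu}_{\alpha}$ then gives $\vec{F}_{\alpha}(c\bm{\mu})\geq\vec{F}_{\alpha}(\bm{\mu}_{\alpha})=\lambda_{\alpha}\bm{\mu}_{\alpha}$. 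Reading the $k$-th coordinate of the two chains at the touching point yields $\lambda(c\mu_{k})\geq[\vec{F}_{\alpha}(c\bm{\mu})]_{k}\geq\lambda_{\alpha}(\mu_{\alpha})_{k}=\lambda_{\alpha}(c\mu_{k})$, and dividing by $c\mu_{k}>0$ gives $\lambda\geq\lambda_{\alpha}$, as claimed.

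The step I expect to be the main obstacle is guaranteeing that the scaling factor satisfies $c\geq 1$, since sub-homogeneity only points the useful way when we scale \emph{up}; if $c<1$ the same coordinate estimate degrades to the worthless bound $\lambda\geq c\lambda_{\alpha}$. This is exactly where the normalization enters, and it is what I would lean on to close the argument: for the iterates produced by~\eqref{eq:G_iter2} one has $H(\bm{\mu})\leq 1=H(\bm{\mu}_{\alpha})$, and since $H$ is monotone and positively homogeneous, $c\bm{\mu}\geq\bm{\mu}_{\alpha}$ implies $cH(\bm{\mu})\geq H(\bm{\mu}_{\alpha})=1$, hence $c\geq 1/H(\bm{\mu})\geq 1$. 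I would therefore carry out the estimate in this regime, which is the one in which the lemma is actually invoked inside the proof of Theorem~\ref{thm:compute_G}, and explicitly flag that the touching factor must be controlled via the normalization rather than assumed. Everything else is a direct consequence of the monotonicity and scalability already established for $\vec{F}_{\alpha}$, so no additional machinery (continuity, contraction, or uniqueness of the eigenvalue over all scalings) is needed.
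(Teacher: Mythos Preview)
The paper does not actually give a self-contained proof here: its entire argument is the single sentence ``The lemma follows from Theorem~13 in~\cite{Krause:2001wd}.'' Your Collatz--Wielandt approach---touching $\bm{\mu}_{\alpha}$ from above by $c\bm{\mu}$ and then chaining monotonicity with sub-homogeneity at the touching coordinate---is therefore a genuinely different and more explicit route, and the mechanism is the right one.

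More importantly, the ``obstacle'' you flag is not merely a technical nuisance but a genuine defect in the lemma \emph{as stated}. For a strictly sub-homogeneous map the conclusion fails for arbitrary $\bm{\mu}\in\mathbb{R}_{+}^{n}$: take $\bm{\mu}=t\bm{\mu}_{\alpha}$ with $t>1$; scalability gives $\vec{F}_{\alpha}(t\bm{\mu}_{\alpha})<t\lambda_{\alpha}\bm{\mu}_{\alpha}=\lambda_{\alpha}\bm{\mu}$, so some $\lambda<\lambda_{\alpha}$ already satisfies $\lambda\bm{\mu}\geq\vec{F}_{\alpha}(\bm{\mu})$. The cited Krause result presumably carries a normalization hypothesis that the paper's restatement silently drops. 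Your fix---restricting to $H(\bm{\mu})\leq 1$ so that $c\bm{\mu}\geq\bm{\mu}_{\alpha}$ forces $c\,H(\bm{\mu})\geq H(\bm{\mu}_{\alpha})=1$ and hence $c\geq 1$---is exactly what is needed, and it suffices for the paper: both invocations of the lemma inside the proof of Lemma~\ref{lma:sif} apply it at the normalized eigenvectors $\bm{\mu}_{\alpha}$ and $\bm{\mu}_{\eta\alpha}$, each satisfying $H(\cdot)=1$ by construction. So your restricted version is both provable by your argument and adequate for every downstream use.
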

\begin{proof}
The lemma follows from Theorem 13 in~\cite{Krause:2001wd}	.
\end{proof}

\begin{lemma}
Define $T:\mathbb{R}_{++}\rightarrow\mathbb{R}_{++}$:
\begin{equation}
T: \alpha\mapsto\frac{1}{H\circ\vec{F}_{\alpha}(\bm{\mu}_{\alpha})}
\label{eq:T}
\end{equation}
where $\bm{\mu}_{\alpha}$ follows the definition in~Lemma~\ref{lma:rho_alpha}. 
With $\S\neq\phi$, $T(\alpha)$ is an SIF.
\label{lma:sif}
\end{lemma}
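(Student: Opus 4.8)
The plan is to reduce $T$ to the reciprocal of the Perron value $\lambda_{\alpha}$ of the SIF $\vec{F}_{\alpha}$ and then to verify the two defining properties of an SIF (monotonicity and strict scalability) for the resulting scalar map $\alpha\mapsto 1/\lambda_{\alpha}$. First I would note that $H$ is positively homogeneous of degree one, being a maximum of sums. Hence, using Lemma~\ref{lma:rho_alpha}, $H\circ\vec{F}_{\alpha}(\bm{\mu}_{\alpha})=H(\lambda_{\alpha}\bm{\mu}_{\alpha})=\lambda_{\alpha}H(\bm{\mu}_{\alpha})=\lambda_{\alpha}$, so that $T(\alpha)=1/\lambda_{\alpha}$; positivity of $T$ is then immediate from $\lambda_{\alpha}>0$. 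It therefore suffices to show that $\lambda_{\alpha}$ is nonincreasing in $\alpha$ (which gives the monotonicity of $T$) and that $\beta\lambda_{\beta\alpha}>\lambda_{\alpha}$ for every $\beta>1$ (which gives the strict scalability $\beta T(\alpha)>T(\beta\alpha)$).

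For monotonicity, fix $\alpha'\geq\alpha$. Since $\pi_{j}$ is nondecreasing in $\alpha$ for every $j$ (constant on $\S$, the identity on $\J\backslash\S$), I get the pointwise comparison $\vec{F}_{\alpha'}(\bm{\mu})\leq\vec{F}_{\alpha}(\bm{\mu})$ for all $\bm{\mu}\geq\bm{0}$. Evaluating at the eigenvector $\bm{\mu}_{\alpha}$ yields $\vec{F}_{\alpha'}(\bm{\mu}_{\alpha})\leq\vec{F}_{\alpha}(\bm{\mu}_{\alpha})=\lambda_{\alpha}\bm{\mu}_{\alpha}$, i.e. $\lambda_{\alpha}\bm{\mu}_{\alpha}\geq\vec{F}_{\alpha'}(\bm{\mu}_{\alpha})$. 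Applying Lemma~\ref{lma:lambda_alpha} to the SIF $\vec{F}_{\alpha'}$ with $\lambda=\lambda_{\alpha}$ and $\bm{\mu}=\bm{\mu}_{\alpha}$ gives $\lambda_{\alpha}\geq\lambda_{\alpha'}$, hence $T(\alpha)\leq T(\alpha')$. This step is routine and does not use $\S\neq\phi$.

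For scalability, fix $\beta>1$ and compare $\vec{F}_{\beta\alpha}$ with $\vec{F}_{\alpha}$ componentwise: on $\S$ the two maps coincide ($\pi_{j}=1$), whereas on $\J\backslash\S$ one has $F_{\beta\alpha,j}=\tfrac{1}{\beta}F_{\alpha,j}$. Evaluating at $\bm{\mu}_{\beta\alpha}$ and using $\vec{F}_{\beta\alpha}(\bm{\mu}_{\beta\alpha})=\lambda_{\beta\alpha}\bm{\mu}_{\beta\alpha}$ shows $\vec{F}_{\alpha}(\bm{\mu}_{\beta\alpha})\leq\beta\lambda_{\beta\alpha}\bm{\mu}_{\beta\alpha}$, with equality on $\J\backslash\S$ and strict inequality on $\S$ (here $\bm{\mu}_{\beta\alpha}>\bm{0}$ because $\vec{F}_{\beta\alpha}$ has strictly positive image). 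Lemma~\ref{lma:lambda_alpha} applied to $\vec{F}_{\alpha}$ then yields the weak bound $\beta\lambda_{\beta\alpha}\geq\lambda_{\alpha}$. The hard part, and the only place where $\S\neq\phi$ is genuinely needed, is to upgrade this to the strict inequality $\beta\lambda_{\beta\alpha}>\lambda_{\alpha}$. I would argue by contradiction: assuming $\beta\lambda_{\beta\alpha}=\lambda_{\alpha}=:\eta$, the vector $\bm{\mu}_{\beta\alpha}$ becomes a supersolution $\vec{F}_{\alpha}(\bm{\mu}_{\beta\alpha})\leq\eta\bm{\mu}_{\beta\alpha}$ that is \emph{strict on the nonempty block} $\S$, while the eigenvector obeys $\vec{F}_{\alpha}(\bm{\mu}_{\alpha})=\eta\bm{\mu}_{\alpha}$. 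Rescaling $\bm{\mu}_{\alpha}$ to touch $\bm{\mu}_{\beta\alpha}$ (a largest-scalar, Collatz--Wielandt type step) and invoking the strict scalability of the SIF $\vec{F}_{\alpha}$ forces a strict inequality at the touching coordinate, contradicting $\vec{F}_{\alpha}(\bm{\mu}_{\alpha})=\eta\bm{\mu}_{\alpha}$. The delicate point, which I expect to be the main obstacle, is propagating the strict gap from $\S$ to the touching coordinate; this uses that $\vec{F}_{\alpha}$ is coupled through the interference terms (its positivity and the fact that every $f_{j}$ depends on the loads generated by other UEs), which I would make explicit from the form of $f_{j}$. As an airtight consistency check that pins down the role of the hypothesis, when $\S=\phi$ one has $\vec{F}_{\alpha}=\tfrac{1}{\alpha}\vec{f}$, so $\lambda_{\alpha}=\lambda_{\vec{f}}/\alpha$ and $T(\alpha)=\alpha/\lambda_{\vec{f}}$ is \emph{linear}, for which $\beta T(\alpha)=T(\beta\alpha)$ holds with equality; thus nonemptiness of $\S$ is precisely what converts the weak bound into the strict scalability required of an SIF.
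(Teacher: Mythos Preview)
Your proposal is essentially the paper's proof. Both reduce $T(\alpha)$ to $1/\lambda_{\alpha}$ via the homogeneity of $H$ and $H(\bm{\mu}_{\alpha})=1$, and both obtain monotonicity by evaluating $\vec{F}_{\alpha'}$ at $\bm{\mu}_{\alpha}$ and invoking Lemma~\ref{lma:lambda_alpha}. The only organizational difference is in the scalability step: instead of your direct comparison plus a Collatz--Wielandt contradiction to upgrade $\beta\lambda_{\beta\alpha}\geq\lambda_{\alpha}$ to strictness, the paper introduces the auxiliary map $\tfrac{1}{\eta}\vec{F}_{\alpha}$ and observes (from the normalization~\eqref{eq:rho_alpha}) that its eigenvector coincides with $\bm{\mu}_{\alpha}$ and its eigenvalue $\lambda'$ satisfies $\eta\lambda'=\lambda_{\alpha}$ exactly; this converts the target inequality into $\lambda'<\lambda_{\eta\alpha}$, which follows from $\vec{F}_{\eta\alpha}(\bm{\mu}_{\eta\alpha})\geq\tfrac{1}{\eta}\vec{F}_{\alpha}(\bm{\mu}_{\eta\alpha})$ (strict on $\S$) together with Lemma~\ref{lma:lambda_alpha}. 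That shortcut spares you the touching-coordinate propagation argument, though in both write-ups the passage from ``strict on the $\S$-block'' to a strict eigenvalue inequality is handled rather informally.
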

\begin{proof}
By Lemma~\ref{lma:rho_alpha}, we have $\vec{F}_{\alpha}(\bm{\mu}_{\alpha})=\lambda_{\alpha}\bm{\mu}_{\alpha}$. Thus, for function $T$, $T(\alpha)=1/H(\lambda_{\alpha}\bm{\mu}_{\alpha})$ holds. Since $H(\bm{\mu}_{\alpha})=1$, we have
\begin{equation}
T(\alpha)=1/H\circ\vec{F}_{\alpha}(\bm{\mu}_{\alpha})=1/\lambda_{\alpha}H(\bm{\mu}_{\alpha})=1/\lambda_{\alpha}
\end{equation} 

 We first prove the monotonicity. Suppose $\alpha'>\alpha$. We use $\lambda_{\alpha'}$ and $\bm{\mu}_{\alpha'}$ to represent respectively the eigenvalue and the eigenvector of $\vec{F}_{\alpha'}$ such that $\vec{F}_{\alpha'}(\bm{\mu}')=\lambda_{\alpha'}\bm{\mu}_{\alpha'}$ and $H(\bm{\mu}_{\alpha'})=1$. For vector $\bm{\mu}_{\alpha}$, one can easily verify that $\vec{F}_{\alpha'}(\bm{\mu}_{\alpha})\leq\vec{F}_{\alpha}(\bm{\mu}_{\alpha})$. Therefore $\lambda_{\alpha}\bm{\mu}_{\alpha}\geq\vec{F}_{\alpha'}(\bm{\mu}_{\alpha})$. By Lemma~\ref{lma:lambda_alpha}, $\lambda_{\alpha'}\leq\lambda_{\alpha}$. Hence $T(\alpha')\geq T(\alpha)$ and the monotonicity holds.

We then prove the scalability. Consider for any $\eta>1$ the function $\frac{1}{\eta}\vec{F}_{\alpha}(\bm{\mu})$, and denote by $\lambda'$ and $\bm{\mu}'$ respectively its eigenvalue and eigenvector, i.e., $\frac{1}{\eta}\vec{F}_{\alpha}(\bm{\mu}')=\lambda'\bm{\mu}'$. Denote by $\lambda_{\eta\alpha}$ and $\bm{\mu}_{\eta\alpha}$ respectively the eigenvalue and eigenvector of function $\vec{F}_{\eta\alpha}$, i.e., $\vec{F}_{\eta\alpha}(\bm{\mu}_{\eta\alpha})=\lambda_{\eta\alpha}\bm{\mu}_{\eta\alpha}$. For vector $\bm{\mu}_{\eta\alpha}$, one can verify that the following relation holds.
\begin{equation}
\lambda_{\eta\alpha}\bm{\mu}_{\eta\alpha}=\vec{F}_{\eta\alpha}(\bm{\mu}_{\eta\alpha})\geq\frac{1}{\eta}F_{\alpha}(\bm{\mu}_{\eta\alpha})
\end{equation}
Based on Lemma~\ref{lma:lambda_alpha}, $\lambda'\leq\lambda_{\eta\alpha}$ holds. Specifically, with $\S\neq\phi$, we have for at least one $i\in\R$ such that $\pi_i(\alpha)=1$. We conclude $\lambda'<\lambda_{\eta\alpha}$ due to the monotonicity of $\vec{F}_{\alpha}(\bm{\mu})$ in $\bm{\mu}$. 
In addition, by~\eqref{eq:rho_alpha}, we have the equation below.
\begin{equation}
\bm{\mu}'=\lim_{k\rightarrow\infty}\frac{\frac{1}{\eta}\vec{F}_{\alpha}(\bm{\mu}^{(k)})}{H\circ\frac{1}{\eta}\vec{F}^{k}_{\alpha}(\bm{\mu})}=\lim_{k\rightarrow\infty}\frac{\vec{F}_{\alpha}(\bm{\mu}^{(k)})}{H\circ\vec{F}_{\alpha}(\bm{\mu}^{(k)})}=\bm{\mu}_{\alpha}
\end{equation}
Also, we have 
\begin{equation}
\frac{1}{\eta}\vec{F}_{\alpha}(\bm{\mu}')=\lambda'\bm{\mu}'\Leftrightarrow\vec{F}_{\alpha}(\bm{\mu}')=\eta\lambda'\bm{\mu}'.
\end{equation}
Therefore, by Lemma~\ref{lma:rho_alpha}, $\eta\lambda'=\lambda_{\alpha}$, i.e. $1/\lambda'=\eta/\lambda_{\alpha}$.
Combined with $\lambda'<\lambda_{\eta\alpha}$, we have
\begin{equation}
T(\eta\alpha)=\frac{1}{\lambda_{\eta\alpha}}<\frac{1}{\lambda'}=\frac{\eta}{\lambda_{\alpha}}=\eta T(\alpha).
\end{equation}
Hence the conclusion.
\end{proof}

We then prove Theorem~\ref{thm:compute_G} as follows.
\begin{proof}
The proof is straightforward, based on Lemmas~\ref{lma:rho_alpha}, Lemma~\ref{lma:P},~and Lemma~\ref{lma:sif}. Denote by $\alpha_{\bm{\mu}}^{(0)},\alpha_{\bm{\mu}}^{(1)},\ldots,\alpha_{\bm{\mu}}$ the sequence generated by $P(\alpha)$, with any $\alpha_{\mu}^{(0)}>0$, for any given $\bm{\mu}\geq \vec{0}$. By Lemma~\ref{lma:P}, $\alpha_{\bm{\mu}}$ is unique for $\bm{\mu}$. Similarly, denote by $\bm{\mu}^{(0)}_{\alpha},\bm{\mu}^{(1)}_{\alpha},\ldots,\bm{\mu}_{\alpha}$ the sequence generated by~\eqref{eq:rho_alpha}, with any $\bm{\mu}^{(0)}_{\alpha}\geq \vec{0}$, for any given $\alpha>0$. By~Lemma~\ref{lma:rho_alpha}, $\alpha_{\bm{\mu}}$ is unique for $\alpha$, and at the convergence we have $H(\bm{\mu}_{\alpha})=1$.
According to Lemma~\ref{lma:sif}, $T(\alpha)$ is an SIF of $\alpha$, then the $\alpha^{*}$ satisfying $\alpha^{*}=T(\alpha^{*})$ is unique. By fixing one of $\alpha$ and $\bm{\mu}$ and compute the sequence for the other alternately, the process falls into the category of asynchronous fixed point iterations, of which the convergence is guaranteed \cite[Page~434]{bertsekas1989parallel}. At the convergence, we have some $\bm{\mu}^{*}\geq\vec{0}$ such that
\begin{equation}
\bm{\mu}^{*}=\frac{\vec{F}_{\alpha}^{*}(\bm{\mu}^{*})}{H\circ\vec{F}_{\alpha}^{*}(\bm{\mu}^{*})},\text{ with }H(\bm{\mu}^{*})=1.
\end{equation}
Hence the conclusion.
\end{proof}

\section*{\small Proof of Theorem~\ref{thm:opt}}
\begin{proof}
The proof is based on the fact that the solution obtained from Theorem~\ref{thm:compute_G} fulfills Lemma~\ref{lma:sufficiency}.
First, by Lemma~\ref{lma:rho_alpha}, in the iterations in Theorem~\ref{thm:compute_G} we have $\vec{F}_{\alpha^{*}}(\bm{\mu}^{*})=\lambda_{\alpha^{*}}\bm{\mu}^{*}$ for a unique $\lambda_{\alpha^{*}}>0$ such that $H(\bm{\mu}^{*})=1$. Also, $\alpha^{*}=1/H\circ\vec{F}(\bm{\mu}^{*})$. Then, by combining these two equalities we get:
\begin{equation}
\alpha^{*}=\frac{1}{H(\lambda_{\alpha^{*}}\bm{\mu}^{*})}=\frac{1}{\lambda_{\alpha^{*}}H(\bm{\mu}^{*})}.
\end{equation}
Since $H(\bm{\mu}^{*})=1$, we then have:
\begin{equation}
\lambda_{\alpha^{*}}=\frac{1}{\alpha^{*}}.
\end{equation}
Hence we obtain the following derivation:
\begin{multline}
\vec{F}_{\alpha^{*}}(\bm{\mu}^{*})=\frac{1}{\alpha^{*}}\bm{\mu}^{*}\Leftrightarrow\alpha^{*}\vec{F}_{\alpha^{*}}(\bm{\mu}^{*})=\bm{\mu}^{*}  \Leftrightarrow \\
\frac{\alpha^{*}f_j(\bm{\mu})}{\pi_j(\alpha^{*})}=\mu_j^{*}~~j\in\J \Leftrightarrow \left\{
\begin{array}{l}
\mu_j=\alpha^{*}f_j(\bm{\mu}^{*})~~j\in\S \\
\mu_j=f_j(\bm{\mu}^{*})~~j\in\J\backslash\S
\end{array}\right.
\label{eq:equalities}
\end{multline}

By Lemma~\ref{lma:sufficiency}, i.e., the sufficient condition of optimality, the theorem hence holds.
\end{proof}

\bibliographystyle{IEEEtran}
\bibliography{ref}

\end{document}